\newcommand{\fracs}[2]{{ \textstyle \frac{#1}{#2} }}
\def \eps {{\epsilon}}
\def \D {{\rm d}}
\def \hP {{\widehat{P}}}
\def \hS {{\widehat{S}}}
\def \hY {{\widehat{Y}}}
\def \VV {{\mathbb{V}}}
\def \EE {{\mathbb{E}}}
\def \ii {{\bf 1}}
\DeclareMathOperator{\bO}{{             {\cal O}}}
\DeclareMathOperator{\bo}{{\scriptscriptstyle {\cal O}}}
\def \D {{\rm d}}
\def \eps {{\epsilon}}
\def \P {{\mathbb{P}}}
\def\JELname{{\bfseries JEL Classification}\enspace}
      \def\JEL#1{\par\addvspace\medskipamount{\rightskip=0pt plus1cm
      \def\and{\ifhmode\unskip\nobreak\fi\ $\cdot$
      }\noindent\JELname\ignorespaces#1\par}}
\begin{document}

\title{Multilevel Monte Carlo For Exponential L\'{e}vy Models}

\author{Michael B. Giles\and Yuan Xia}


\institute{Mike Giles \at
              Mathematical Institute and Oxford-Man Institute of Quantitative Finance,
Oxford University \\
\email{mike.giles@maths.ox.ac.uk}
           \and
           Yuan Xia\at
              Mathematical Institute and Oxford-Man Institute of Quantitative Finance,
Oxford University\\
\email{yuan.xia.cn@gmail.com}  
}
\date{Received: date / Accepted: date}

\maketitle


\begin{abstract}
We apply the multilevel Monte Carlo method for option pricing problems 
using exponential L\'{e}vy models with a uniform timestep discretisation.
For lookback and barrier options, we derive estimates of the convergence 
rate of the error introduced by the discrete monitoring of the running 
supremum of a broad class of L\'{e}vy  processes. We then use these to
obtain upper bounds on the multilevel Monte Carlo variance convergence rate
for the Variance Gamma, NIG and $\alpha$-stable processes. 
We also provide analysis of a trapezoidal approximation for Asian options.
Our method is illustrated by numerical experiments. 

\keywords{ multilevel Monte Carlo \and exponential
L\'{e}vy models \and Asian options \and lookback options \and barrier options}
 \subclass{ 65C05 \and 91G60}
\JEL{C15\and C63}
\end{abstract}

\section{Introduction}

Exponential L\'{e}vy models are based on the assumption that asset returns
follow a L\'{e}vy process \cite{schoutens03,ct04}. The asset price follows
\begin{equation}
S_{t}=S_{0}\exp \left( X_{t}\right)  \label{exp-levy}
\end{equation}%
where $X$ is an $(m,\sigma ,\nu )$-L\'{e}vy process
\begin{equation*}
X_{t}=mt+\sigma B_{t}+\int_{0}^{t}\int_{\{|z|\ge 1\}}z\ J(%
\D z,\D s)+\int_{0}^{t}\int_{\{|z|< 1\}}z\left( J(\D z,\D %
s)-\nu (\D z)\D s\right)
\end{equation*}%
where $m$ is a constant, $B_{t}$ is a Brownian Motion, $\ J$ is the jump
measure and $\nu $ is the L\'{e}vy measure(c.f. Theorem 42 in \cite{protter04}).

Models with jumps give an intuitive explanation of implied volatility skew
and smile in the index option market and foreign exchange market(Chapter 11 in \cite{ct04})%
. The jump fear is mainly on the downside in the equity market which
produces a premium for low-strike options; the jump risk is symmetric in the
foreign exchange market so the implied volatility has a smile shape. Chapter 7 in \cite%
{ct04} shows that models building on pure jump processes can reproduce the
stylized facts of asset returns, like heavy tails and the asymmetric
distribution of increments. Since pure jump processes of finite activity
without a diffusion component cannot generate a realistic path, it is
natural to allow the jump activity to be infinite. In this work we deal with
infinite-activity pure jump exponential L\'{e}vy models, in particular
models driven by Variance Gamma (VG), Normal Inverse Gaussian (NIG) and $%
\alpha $-stable processes which allow direct simulation of increments.

We are interested in estimating the expected payoff value $\mathbb{E}[f(S)]$
in option pricing problems. In the case of European options, it is possible
to directly sample the final value of the underlying L\'{e}vy process,
but in the case of Asian, lookback and barrier options the option value
depends on functionals of the L\'{e}vy process and so it is necessary to
approximate those. In the case of a VG model with a lookback option, the
convergence results in \cite{dl11} show that to achieve an $\bO(\eps)$ root
mean square (RMS) error using a standard Monte Carlo method with a uniform
timestep discretisation requires $\bO(\eps^{-2})$ paths, each with 
$\bO(\eps^{-1})$ timesteps, leading to a computational complexity of 
$\bO(\eps^{-3})$.

In the case of simple Brownian diffusion, Giles \cite%
{giles07b,giles08b} introduced a multilevel Monte Carlo (MLMC) method,
reducing the computational complexity from $\bO(\eps^{-3})$ to $\bO(\eps^{-2})$
for a variety of payoffs. The objective of this paper is to investigate
whether similar benefits can be obtained for exponential L\'{e}vy processes.

Various researchers have investigated simulation methods for the running
maximum of L\'{e}vy processes. Reference \cite{ft12} develops an adaptive Monte Carlo
method for functionals of killed L\'{e}vy processes with a controlled bias.
Small-time asymptotic expansions of the exit probability are given with
computable error bounds. For evaluating the exit probability when the
barrier is close to the starting point of the process, this algorithm
outperforms a uniform discretisation significantly. Reference \cite{kkpv12} develops a
novel Wiener-Hopf Monte-Carlo method to generate the joint distribution of $%
\left(X_{T},\sup_{0\leq t\leq T}X_{t}\right) $ which is further extended to
MLMC in \cite{fkss12}, obtaining an RMS error $\eps$ with a computational
complexity of $\bO\left( \eps^{-3}\right) $ for  L\'{e}vy processes with bounded variation and $\bO\left( \eps^{-4}\right) $ for processes with infinite variation. The method
currently cannot be directly applied to VG, NIG and $\alpha-$stable processes.  
References \cite{dh10,dereich11} adapt MLMC to L\'{e}vy-driven SDEs with payoffs which are
Lipschitz w.r.t.~the supremum norm. If the L\'{e}vy process does not
incorporate a Brownian process, reference \cite{dereich11} obtains an $\bO\left( \eps%
^{-(6\beta)/(4-\beta )}\right)$ upper bound on the worst case computational
complexity, where $\beta$ is the BG index which will be defined later.

In contrast to those advanced techniques, we take the discretely monitored
maximum based on a uniform timestep discretisation of the L\'{e}vy process as the
approximation.
The outline of the work is as follows. First we review the Multilevel Monte
Carlo method and present the three L{\'e}vy processes we will consider in
our numerical experiments.
To prepare for the analysis of the multilevel
variance of lookback and barrier, we bound the convergence rate of the
discretely monitored running maximum for a large class of L\'{e}vy processes
whose L\'{e}vy measures have a power law behavior for small jumps, and have
exponential tails. Based on this, we conclude by bounding the variance of
the multilevel estimators. Numerical results are then presented for the 
multilevel Monte Carlo applied to Asian, lookback and barrier options using 
the three different exponential L\'{e}vy models.

\section{Multilevel Monte Carlo (MLMC) method}

For a path-dependent payoff $P$ based on an exponential L\'{e}vy model on
the time interval $[0,T]$, let $\hP_{\ell }$ denote its approximation using a
discretisation with $M^{\ell }$ uniform timesteps of size $h_{\ell
}=M^{-\ell }\,T$ on level $\ell$; in the numerical results reported later,
we use $M\!=\!2$. 
Due to the linearity of the expectation
operator, we have the following identity:
\begin{equation}
\EE[\hP_L]=\EE[\hP_0]+\sum_{\ell =1}^{L}\EE[\hP_\ell \!-\! \hP_{\ell-1}].
\label{eq:identity}
\end{equation}%
Let $\hY_{0}$ denote the standard Monte Carlo estimate for $\EE[\hP_0]$
using $N_{0}$ paths, and for $\ell >0$, we use $N_{\ell }$ independent paths
to estimate $\EE[\hP_\ell \!-\! \hP_{\ell-1}]$ using
\begin{equation}
\hY_{\ell }=N_{\ell }^{-1}\sum_{i=1}^{N_{\ell }}\left( \hP_{\ell }^{(i)}\!-\!%
\hP_{\ell -1}^{(i)}\right) .  \label{eq:est_l}
\end{equation}%
For a given path generated for $\hP_{\ell }^{(i)}$, we can calculate $\hP%
_{\ell -1}^{(i)}$ using the same underlying L\'{e}vy path. The multilevel
method exploits the fact that $V_{\ell }:=\VV[\hP_\ell \!-\!\hP_{\ell-1}]$
decreases with $\ell $, and adaptively chooses $N_{\ell }$ to minimise the
computational cost to achieve a desired RMS error. This is summarized in the
following theorem in \cite{giles12,giles15}:

\begin{theorem}
\label{thm:cc} Let $P$ denote a functional of $S_{t}$, and let $\hP_{\ell }$
denote the corresponding approximation using a discretisation with uniform
timestep $h_{\ell }=M^{-\ell }\,T$. If there exist independent estimators $%
\hY_{\ell }$ based on $N_{\ell }$ Monte Carlo samples, each with complexity $C_\ell$,
and positive
constants $\alpha,\beta ,c_{1},c_{2},c_{3}$ such that $\alpha\!\geq \!{\ %
\textstyle\frac{1}{2}}\min(1,\beta)$ and

\begin{itemize}
\item[i) ] $\displaystyle\left\vert \EE[\hP_\ell - P]\right\vert \leq
c_{1}\,h_{\ell }^{\alpha }$

\item[ii) ] $\displaystyle\EE[\hY_\ell]=\left\{
\begin{array}{ll}
\EE[\hP_0], & \ell =0 \\[0.1in]
\EE[\hP_\ell - \hP_{\ell-1}], & \ell >0%
\end{array}%
\right. $

\item[iii) ] $\displaystyle\ \VV[\hY_\ell]\leq c_{2}\,N_{\ell }^{-1}h_{\ell
}^{\beta }$

\item[iv) ] $\displaystyle\ C_{\ell }\leq c_{3}\,N_{\ell }\,h_{\ell }^{-1}$,
\end{itemize}

\noindent then there exists a positive constant $c_{4}$ such that for any $%
\eps\!<\!e^{-1}$ there are values $L$ and $N_{\ell }$ for which the
multilevel estimator
\begin{equation*}
\hY=\sum_{\ell =0}^{L}\hY_{\ell },
\end{equation*}%
has a mean-square-error with bound
\begin{equation*}
MSE\equiv \EE\left[ ( \hY-\EE[P]) ^{2}\right] <\eps^{2}
\end{equation*}%
with a computational complexity $C$ with bound
\begin{equation*}
C\leq \left\{
\begin{array}{ll}
c_{4}\,\eps^{-2}, & \ \beta >1, \\[0.1in]
c_{4}\,\eps^{-2}(\log \eps)^{2}, & \ \beta =1, \\[0.1in]
c_{4}\,\eps^{-2-(1-\beta )/\alpha }, & \ 0<\beta <1.%
\end{array}%
\right.
\end{equation*}
\end{theorem}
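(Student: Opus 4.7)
The plan is to decompose the mean-square error in the standard way
$$ MSE = \VV[\hY] + \bigl(\EE[\hY] - \EE[P]\bigr)^2 , $$
and to arrange that each term is below $\eps^2/2$. Since the $\hY_\ell$ are independent and (ii) gives a telescoping expectation, the squared bias equals $(\EE[\hP_L - P])^2 \le c_1^2 h_L^{2\alpha}$ by (i). I would therefore fix the finest level $L$ as the smallest integer such that $c_1 h_L^\alpha \le \eps/\sqrt{2}$; this forces $L = \alpha^{-1}\log_M(\sqrt{2}\,c_1 T/\eps) + O(1)$, in particular $h_L = \Theta(\eps^{1/\alpha})$ and $L=O(\log \eps^{-1})$.

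Next, using (iii) the level variances satisfy $\VV[\hY_\ell] \le c_2 N_\ell^{-1} h_\ell^\beta$, and (iv) bounds the per-level cost by $c_3 N_\ell h_\ell^{-1}$. Treating the $N_\ell$ as continuous, I would minimise the total cost $\sum_\ell c_3 N_\ell h_\ell^{-1}$ subject to $\sum_\ell c_2 N_\ell^{-1} h_\ell^\beta \le \eps^2/2$ using a Lagrange multiplier; this yields the familiar choice $N_\ell \propto h_\ell^{(\beta+1)/2}$, and substituting back shows that the optimised total cost is bounded by
$$ C \;\le\; 2\eps^{-2} c_3 \Bigl( \sum_{\ell=0}^{L} \sqrt{c_2\, h_\ell^{\beta-1}} \Bigr)^{\!2} \;+\; (\text{rounding terms}) . $$
Because $h_\ell = M^{-\ell} T$, the geometric sum $\sum_\ell h_\ell^{(\beta-1)/2}$ splits into three regimes: it is bounded by a constant when $\beta>1$, equals $\Theta(L)=\Theta(\log \eps^{-1})$ when $\beta=1$, and is dominated by its largest term $h_L^{(\beta-1)/2} = \Theta(\eps^{(\beta-1)/(2\alpha)})$ when $\beta<1$. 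Squaring and multiplying by $\eps^{-2}$ gives precisely the three bounds stated in the theorem.

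The main technical wrinkle—and the reason for the hypothesis $\alpha \ge \tfrac12 \min(1,\beta)$—is in controlling the rounding: we actually need $N_\ell = \lceil\,\cdot\,\rceil \ge 1$, and so the ``$+1$'' in each ceiling contributes an extra cost $\sum_\ell c_3 h_\ell^{-1} = O(h_L^{-1}) = O(\eps^{-1/\alpha})$. For this extra cost to be absorbed into the stated bound, one needs $\eps^{-1/\alpha} \le \eps^{-2}$ when $\beta\ge 1$ (giving $\alpha \ge \tfrac12$) and $\eps^{-1/\alpha} \le \eps^{-2-(1-\beta)/\alpha}$ when $\beta<1$ (giving $\alpha \ge \beta/2$); these combine into exactly the stated condition. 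The condition $\eps < e^{-1}$ is used only so that $\log \eps^{-1} \ge 1$ and the various $O(1)$ constants absorb cleanly into a single $c_4$.

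I expect the only place requiring real care is the bookkeeping around the ceilings and the definition of $L$: one must verify that the chosen $L$ and $N_\ell$ simultaneously satisfy the bias, variance, and cost bounds, and that the integer-rounding overhead does not inflate the asymptotic complexity beyond the claimed expression. The sum estimates themselves and the Lagrange optimisation are routine; the case split on $\beta$ is where the proof organises itself naturally.
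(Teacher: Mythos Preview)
The paper does not actually prove this theorem: it is stated with the attribution ``This is summarized in the following theorem in \cite{giles12}'' and no proof is given. Your sketch reproduces the standard argument from that reference---bias/variance split, choice of $L$ from the weak-error bound, Lagrangian optimisation of the $N_\ell$, and the three-way case analysis of the geometric sum $\sum_\ell h_\ell^{(\beta-1)/2}$---and your treatment of the ceiling correction, together with the explanation of why the hypothesis $\alpha\ge\tfrac12\min(1,\beta)$ is exactly what is needed to absorb the $O(\eps^{-1/\alpha})$ rounding cost, is correct. There is nothing to compare against in the present paper; your proposal is sound and matches the original proof in \cite{giles12} (and \cite{giles08b}).
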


We will focus on the multilevel variance convergence rate $\beta$ in the following numerical results and analysis since it is crucial in determining the computational complexity.


\section{L{\'e}vy models}

The numerical results to be presented later use the following three
models.

\subsection{Variance Gamma (VG) \label{sec:VG}}

The VG process with parameter set $(\sigma ,\theta ,\kappa )$ is the L\'{e}%
vy process $X$ with characteristic function $\mathbb{E[}\exp
(iuX_{t})]=(1-iu\theta \kappa +{\ \textstyle \frac{1}{2} }\sigma ^{2}
u^{2}\kappa)^{-t/\kappa }. $ The L\'{e}vy measure of the VG process is (\cite%
Table 4.5 in {ct04})
\begin{equation*}
\nu (x)=\frac{1}{\kappa \left\vert x\right\vert }e^{A-B\left\vert
x\right\vert }\ \ \text{with\ \ }A=\frac{\theta }{\sigma ^{2}}\ \ \text{and\
\ }B=\frac{\sqrt{\theta ^{2}+2\sigma ^{2}/\kappa }}{\sigma ^{2}}.
\end{equation*}
One advantage of the VG process is that its additional parameters
make it possible to fit the skewness and kurtosis of the stock 
returns (section 7.3 in \cite{ct04}).
%
%
Another is that it is easily simulated as we have a subordinator
representation $X_{t}=\theta G_{t}+\sigma B_{G_{t}}$ in which $B$ is a
Brownian process and the subordinator $G$ is a Gamma process with
parameters ($1/\kappa,1/\kappa $).

For the ease of computation, we follow the mean-correcting pricing measure
in section 6.2.2 in \cite{schoutens03}, with risk-free interest rate $r=0.05$. 
Let $\exp(-rt)S_t$ be a martingale. This results in the drift being
\[
m = r + \kappa^{-1} \log(1 + \theta \kappa - \fracs{1}{2} \sigma^2 \kappa).
\]
After transforming the parameter representation to the definition we use, the calibration in table 6.3 in \cite{schoutens03} gives 
$\sigma =0.1213,\theta =-0.1436,\kappa=0.1686$.

\subsection{Normal Inverse Gaussian (NIG) \label{sec:NIG}}

The NIG process with parameter set $(\sigma,\theta ,\kappa )$ is the L\'{e}%
vy process $X$ with characteristic function $\mathbb{E[}\exp
(iuX_{t})]=\exp \left( \frac{t}{\kappa }-\frac{t}{\kappa }\sqrt{1-2iu\theta
\kappa +\kappa \sigma ^{2}u^{2}}\right) $ and L\'{e}vy measure
\begin{equation*}
\nu (x)=\frac{C}{\kappa \left\vert x\right\vert }e^{Ax}K_{1}\left(
B\left\vert x\right\vert \right) \ \text{with\ }A=\frac{\theta }{\sigma ^{2}}%
,\ B=\frac{\sqrt{\theta ^{2}+\sigma ^{2}/\kappa }}{\sigma ^{2}},\text{\ }C=%
\frac{\sqrt{\theta ^{2}+2\sigma ^{2}/\kappa }}{2\pi \sigma \sqrt{\kappa }}.
\end{equation*}%
$K_{n}\left( x\right) $ is the modified Bessel function of the second kind (%
section 4.4.3 in \cite{ct04}). As $x\rightarrow 0$, $K_{1}\left( x\right) \sim \frac{1}{x%
}+\bO\left( 1\right), $ 
while as $x\rightarrow \infty$, $K_{1}\left( x\right) \sim e^{-x}\sqrt{\frac{%
\pi }{2\left\vert x\right\vert }}\left( 1+\bO\left( \frac{1}{\left\vert
x\right\vert }\right) \right) . $

In terms of simulation, the NIG process can be represented as $X_{t}=\theta
I_{t}+\sigma B_{I_{t}}$, where the subordinator $I_{t}$ is an Inverse
Gaussian process with parameters $(\frac{1}{\kappa}, 1)$. 
Algorithm 6.9 in \cite{ct04} can be used to generate Inverse Gaussian samples.

Using the mean-correcting pricing measure leads to
\[
m = r - \kappa^{-1} + \pi C B \kappa^{-1}\sqrt{B^2 - (A+1)^2}.
\]
Following the calibration in \cite{schoutens03} we use the parameters 
$\sigma=0.1836,\theta =-0.1313,\kappa =1.2819$, and again use risk-free 
interest rate $r=0.05$.

\subsection{ Spectrally negative $\protect\alpha $-stable process}

The scalar spectrally negative $\alpha $-stable process has a L\'{e}vy measure of the form; see section 1.2.6 in \cite{kyprianou06}:
\begin{equation*}
\nu (x)=\frac{B}{\left\vert
x\right\vert ^{\alpha +1}}1_{\left\{ x<0\right\} }
\end{equation*}%
for $0<\alpha <2$ and some non-negative $B$. We follow the reference
to discuss another parameterisation of $\alpha $-stable process with
characteristic function%
\begin{equation}
\begin{array}{l}
\mathbb{E[}\exp (iuX_{t})]=\exp \left\{ -\!t\!B^{\alpha
}\left\vert u\right\vert ^{\alpha }\left( 1 + i\text{sgn}\left(
u\right) \tan \frac{\pi \alpha }{2}\right) \right\} ,\text{ if }\alpha \neq
1, \\
\mathbb{E[}\exp (iuX_{t})]=\exp \left\{ -\!t\!B \left\vert
u\right\vert \left( 1+i\frac{2}{\pi }\text{sgn}\left(
u\right) \log \left\vert u\right\vert \right) \right\} ,\text{ if }\alpha =1,%
\end{array}
\label{def:ch_fun_a-stable}
\end{equation}%
where sgn$\left( u\right) \!=\!\left\vert u\right\vert /u$ if $u\!\neq\! 0$ 
and sgn$\left(0\right)\!=\!0$.There 
are no positive jumps for the spectrally negative  process, which has a finite 
exponential moment $\mathbb{E[}\exp (uX_{t})]$ \cite{cw03}.

For this case, the mean-correcting drift is
\[
m = r +B^\alpha \sec\fracs{\alpha \pi}{2}.
\]
Sample paths of $\alpha $-stable processes can be generated by the algorithm
in \cite{cms76}. 
Following \cite{cw03}, we use the parameters $\alpha\!=\!1.5597$ and $B\!=\!0.1486$.

\section{Key numerical analysis results}

The variance of the multilevel correction, $V_{\ell }=\VV[\hP_{\ell}-\hP_{\ell-1}]$
depends on the behavior of the difference between the continuously and
discretely monitored suprema of $X_t$, defined for a unit time interval as
\begin{equation*}
D_n = \sup_{0\leq t\leq 1}X_t\ -\max_{i=0,1,\ldots ,n}X_{i/n}.
\end{equation*}

To derive the order of weak convergence for lookback-type payoffs, we are
concerned with $\mathbb{E}\left[ D_{n}\right]$, which is extensively studied
in the literature. For example, \cite{dl11}, \cite{chen11} and \cite{cfs11}
derive asymptotic expansions for jump-diffusion, VG, NIG processes, as well
as estimates for general L\'{e}vy processes, by using Spitzer's identity
\cite{spitzer56}.

A key result due to Chen \cite{chen11} is the following:

\begin{theorem}
\label{thm:mean}Suppose $X$ is a scalar L\'{e}vy process with triple $%
(m,\sigma ,\nu ),$ with finite first moment, i.e.%
\begin{equation*}
\int_{\{ |x| >1\}}\left\vert x\right\vert \ \nu (\D x)<\infty
.
\end{equation*}%
Then $\displaystyle D_{n}=\sup_{0\leq t\leq 1}X_{t}-\max_{i=0,1,\ldots
,n}X_{i/n}\ $ satisfies

\begin{enumerate}
\item If $\sigma >0$%
\begin{equation*}
\mathbb{E}\left[ D_{n}\right] =\bO(1/ \sqrt{n}) ;
\end{equation*}

\item If $\sigma =0$ and $X$ is of finite variation, i.e. $%
\int_{\{ |x| <1\}} |x| \nu (\D x)<\infty $%
\begin{equation*}
\mathbb{E}\left[ D_{n}\right] =\bO( \log n /n) ;
\end{equation*}

\item If $\sigma =0$ and $X$ is of infinite variation, then
\begin{equation*}
\mathbb{E}\left[ D_{n}\right] =\bo( n^{-1/\beta + \delta}) ,
\end{equation*}%
where
\begin{equation*}
\beta =\inf \left\{ \alpha >0:\int_{\{ |x| <1\}}\left\vert
x\right\vert ^{\alpha }\nu (\D x)<\infty \right\}
\end{equation*}%
is the Blumenthal-Getoor index of $X$, and $\delta \!>\!0$ is an
arbitrarily small strictly positive constant.
\end{enumerate}
\end{theorem}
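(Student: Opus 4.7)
The plan is to reduce the theorem to a Riemann-sum error estimate via Spitzer's identity and then read off each case from the small-time behavior of $\mathbb{E}[X_t^+]$. Applying the classical Spitzer identity to the random walk $S_k = X_{k/n}$ and its continuous-time L\'{e}vy analog (both legitimate under the assumed finite first-moment hypothesis) I get
$$
\mathbb{E}\!\left[\max_{0\leq k\leq n}X_{k/n}\right]
   = \sum_{k=1}^{n}\frac{\mathbb{E}[X_{k/n}^+]}{k},
\qquad
\mathbb{E}\!\left[\sup_{0\leq t\leq 1}X_t\right]
   = \int_0^1 \frac{\mathbb{E}[X_t^+]}{t}\,\D t,
$$
so that subtracting gives
$$
\mathbb{E}[D_n]
  = \int_0^1 \frac{\mathbb{E}[X_t^+]}{t}\,\D t
   - \sum_{k=1}^{n}\frac{\mathbb{E}[X_{k/n}^+]}{k}.
$$
Rewriting the sum as $\int_0^1 g_n(s)\,\D s$ with $g_n(s)=\mathbb{E}[X^+_{\lceil ns\rceil/n}]/(\lceil ns\rceil/n)$ displays $\mathbb{E}[D_n]$ as the quadrature error of a right-endpoint Riemann sum for $g(t):=\mathbb{E}[X_t^+]/t$.

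The three cases of the theorem then correspond to three distinct small-time regimes of $g$. When $\sigma>0$ the Gaussian part dominates and $\mathbb{E}[X_t^+]\sim\sigma\sqrt{t/(2\pi)}$, hence $g(t)\asymp t^{-1/2}$ near the origin. When $\sigma=0$ and $X_t$ has finite variation, the integrability of $|x|\,\nu(\D x)$ near the origin yields $\mathbb{E}[X_t^+]=\bO(t)$, so $g$ stays bounded (possibly with a logarithmic correction as $t\to 0$ for processes close to the boundary of infinite variation). When $\sigma=0$ with infinite variation of Blumenthal-Getoor index $\beta$, I would split the jumps at level $\varepsilon=t^{1/\beta}$ into a small-jump $L^2$-martingale and a compound-Poisson large-jump part, and use the consequence $\int_{|x|<\varepsilon}x^2\,\nu(\D x)=\bO(\varepsilon^{2-\beta-\delta})$ of the infimum definition of $\beta$ to deduce $\mathbb{E}[X_t^+]=\bO(t^{1/\beta-\delta/\beta})$.

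With these asymptotics in hand, the final step is to split the quadrature error at $t=1/n$. On the bulk $[1/n,1]$, a mean-value estimate on each subinterval gives a total contribution of order $n^{-1}\int_{1/n}^{1}|g'(t)|\,\D t$, which evaluates to $\bO(n^{-1/2})$ in case~1, $\bO(n^{-1}\log n)$ in case~2 (the harmonic-type log factor arising because $|g'|$ is only integrable against $\D t/t$ near zero), and $\bO(n^{-1/\beta+\delta})$ in case~3. On the boundary layer $[0,1/n]$ one compares $\int_0^{1/n}g(s)\,\D s$ with $g(1/n)/n=\mathbb{E}[X_{1/n}^+]$ directly, and the same three rates reappear. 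The hardest step will be the sharp small-time bound $\mathbb{E}[X_t^+]=\bO(t^{1/\beta-\delta/\beta})$ in case~3: the Blumenthal-Getoor index only controls jump moments through an infimum, so the $\delta$ loss is unavoidable without extra regularity, and the truncation level $\varepsilon(t)$ must be chosen so as to simultaneously control the variance of the compensated small-jump martingale and the probability of a large-jump excursion over $[0,t]$.
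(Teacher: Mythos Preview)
The paper does not give its own proof of this theorem: it is quoted as ``a key result due to Chen,'' and the surrounding text explicitly remarks that such estimates are obtained in the literature ``by using Spitzer's identity.'' Your proposal is therefore precisely the route the paper points to. The reduction of $\mathbb{E}[D_n]$ to a quadrature error for $g(t)=\mathbb{E}[X_t^+]/t$ via the discrete and continuous Spitzer formulas, together with the case-by-case small-time analysis of $\mathbb{E}[X_t^+]$, is the correct strategy and is the one used in the cited reference.

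One imprecision worth tightening: in case~2 your parenthetical that ``$|g'|$ is only integrable against $\D t/t$ near zero'' says the opposite of what you need. The $\log n$ factor appears because $|g'(t)|$ can be of order $1/t$ near the origin (equivalently, $g$ has logarithmically unbounded total variation on $(0,1]$), so that the bulk contribution $n^{-1}\int_{1/n}^{1}|g'(t)|\,\D t$ is $\bO(n^{-1}\log n)$ rather than $\bO(n^{-1})$. Since $g'(t)=t^{-1}\,\frac{\D}{\D t}\mathbb{E}[X_t^+]-t^{-2}\mathbb{E}[X_t^+]$, the bare bound $\mathbb{E}[X_t^+]=\bO(t)$ that you stated already forces the second term to be $\bO(1/t)$; to keep the first term under control you additionally need $\frac{\D}{\D t}\mathbb{E}[X_t^+]$ bounded near $0$, which is where the finite-variation hypothesis is actually used and where most of the work in the cited argument lies.
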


The VG process has finite variation with Blumenthal-Getoor index $0$; the
NIG process has infinite variation with Blumenthal-Getoor index $1$. They
correspond to the second and third cases of Theorem \ref{thm:mean}
respectively.


For the multilevel variance analysis we require higher moments of $D_n$. In
the pure Brownian case, Asmussen {\it et al} (\cite{agp1995}) obtain the
asymptotic distribution of $D_{n}$, which in turn gives the asymptotic
behavior of $\mathbb{E}[ D_{n}^{2}]$. \cite{dl11} extends the
result to finite activity jump processes with non-zero diffusion.

However, in this paper we are looking at infinite activity jump processes.
Our main new result is therefore concerned with the $L^{p}$ convergence rate
of $D_{n}$ for pure jump L\'{e}vy processes. This will be used later to
bound the variance of the Multilevel Monte Carlo correction term $V_{\ell }$
for both lookback and barrier options.

\begin{theorem}
\label{prop:Un}Let $X$ be a scalar pure jump L\'{e}vy process, and
suppose its L\'{e}vy measure $\nu (x)$ satisfies
\begin{equation}
\begin{array}{l}
C_{2}\left\vert x\right\vert ^{-1-\alpha }\leq \nu (x)\leq C_{1}\left\vert
x\right\vert ^{-1-\alpha },\ \text{for }\left\vert x\right\vert \leq 1; \\%
[0.1in]
\ \nu (x)\leq \exp \left( -C_{3}\left\vert x\right\vert \right) ,\ \text{for
}\left\vert x\right\vert >1,%
\end{array}
\label{ass:prop:Un}
\end{equation}%
where $C_{1},C_{2},C_{3}>0,\ 0\leq \alpha <2$ are constants. Then for $p\geq
1$
\begin{equation*}
D_{n}=\sup_{0\leq t\leq 1}X_{t}\ -\ \max_{i=0,1,\ldots ,n}X_{i/n} \ \
\end{equation*}
satisfies
\begin{equation*}
\mathbb{E}\left[ D_{n}^{p}\right] =\left\{
\begin{array}{ll}
\bO\left( 1 / n\right), & p>2\alpha ; \\[0.1in]
\bO\left( \left( \log n / n\right) ^{\frac{p}{2\alpha}}\right), &
p\leq 2\alpha .%
\end{array}%
\right.
\end{equation*}

\vspace{0.2in}

If, in addition, $X_{t}$ is spectrally negative, i.e.~$\nu(x)=0$ for $x>0$,
then

\begin{equation*}
\mathbb{E}\left[ D_n^p\right] =\left\{
\begin{array}{ll}
\bO\left( n^{-p}\right), & 0\leq \alpha <1; \\[0.1in]
\bo\left( n^{-p/\alpha + \delta}\right), & 1\leq \alpha <2;%
\end{array}%
\right.
\end{equation*}
for any $\delta>0$.
\end{theorem}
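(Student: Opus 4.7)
The key idea is that a large value of $D_n$ forces a specific two-event pattern inside the interval $[i^*/n,(i^*+1)/n]$ containing the continuous argmax $t^*$: both an upward jump before $t^*$ and a downward jump after $t^*$, each of size at least of order $D_n$. Upward and downward jumps of a L\'evy process are independent Poisson point processes, so the per-interval probability of this pattern scales as the \emph{square} of a single-direction jump probability, and this squaring is what produces the factor $2\alpha$ in the critical exponent. More precisely, since $\max_{0\le j\le n}X_{j/n}\ge\max(X_{i^*/n},X_{(i^*+1)/n})$,
\[
D_n\le\min\bigl(X_{t^*}-X_{i^*/n},\;X_{t^*}-X_{(i^*+1)/n}\bigr),
\]
with both terms non-negative; equivalently $D_n\le\max_{0\le i<n}\min(\xi_1^{(i)},\xi_2^{(i)})$ where $\xi_1^{(i)}:=\sup_{t\in[i/n,(i+1)/n]}(X_t-X_{i/n})$ and $\xi_2^{(i)}:=\sup_{t\in[i/n,(i+1)/n]}(X_t-X_{(i+1)/n})$.

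I would then control each of the two terms via the L\'evy decomposition $X_t=X_t^{(\varepsilon)}+Y_t^{(\varepsilon)}+b_\varepsilon t$ at threshold $\varepsilon$. Assumption (\ref{ass:prop:Un}) gives $\int_{|y|\le\varepsilon}y^2\nu(\D y)\le C\varepsilon^{2-\alpha}$ and $\nu(|y|>\varepsilon)\le C\varepsilon^{-\alpha}$ for $\varepsilon\le 1$. Doob's $L^2$ inequality applied to the compensated small-jump martingale yields $\E[\sup_{t\le 1/n}|X_t^{(\varepsilon)}|^2]\le Cn^{-1}\varepsilon^{2-\alpha}$, so with $\varepsilon=x$ the probability that small jumps alone produce a deviation exceeding $x$ in a single interval is $O(n^{-1}x^{-\alpha})$. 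For large jumps of size $>x$, upward and downward jumps form independent Poisson processes with intensities at most $Cx^{-\alpha}$; hence the per-interval probability that both an up-jump and a down-jump of size $>x$ occur in the correct temporal order relative to $t^*$ is $O(n^{-2}x^{-2\alpha})$, and a union bound over the $n$ intervals gives
\[
\P(D_n>x)\le Cn^{-1}x^{-2\alpha}\qquad\text{for }x\in[x_0,1],
\]
where $x_0\asymp(n^{-1}\log n)^{1/(2\alpha)}$ is the point at which the sub-Gaussian small-jump tail matches the polynomial large-jump tail, and the exponential-tail assumption on $\nu$ for $|y|>1$ produces exponential decay for $x>1$. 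Inserting this into $\E[D_n^p]=\int_0^\infty p x^{p-1}\P(D_n>x)\,\D x$ and splitting the integration at $x_0$ and $1$ gives a contribution $x_0^p\asymp(\log n/n)^{p/(2\alpha)}$ from $[0,x_0]$ and $Cn^{-1}\int_{x_0}^1 p x^{p-1-2\alpha}\,\D x$ from $[x_0,1]$; the latter equals $O(n^{-1})$, $O(n^{-1}\log n)$, or $O(n^{-1}x_0^{p-2\alpha})=O((\log n/n)^{p/(2\alpha)})$ according to whether $p>2\alpha$, $p=2\alpha$, or $p<2\alpha$, yielding the three stated rates.

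For the spectrally negative refinement, $\nu$ has no positive support so the two-jump mechanism above is impossible. For $\alpha<1$, the process is of finite variation and the truncated drift $b_\varepsilon$ converges as $\varepsilon\to 0$ to a finite limit $b$, so $\sup_{t\le 1/n}X_t\le b/n$ deterministically and $\E[D_n^p]\le Cn^{-p}$. For $\alpha\in[1,2)$, the drift $b_\varepsilon$ diverges, but the compensated small-jump martingale admits $L^q$-moments bounded by $C_q n^{-q/\alpha+\delta}$ for any $\delta>0$ and $q$ arbitrarily large, via BDG combined with the power-law bound on $\nu$; a Markov inequality on these higher moments yields $\E[D_n^p]=o(n^{-p/\alpha+\delta})$. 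The main obstacle throughout is arranging the small-jump contributions simultaneously in the pre-$t^*$ and post-$t^*$ halves so that the two-event product structure survives---this is what ensures the critical exponent is $2\alpha$ rather than $\alpha$---which requires exploiting the independence of the upward and downward Poisson jump components and carefully matching the threshold $\varepsilon$ to the tail level $x$ in both halves.
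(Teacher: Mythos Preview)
Your decomposition and the ``two-event'' intuition are the same as the paper's, and your large-jump argument is actually a clean variant: with threshold $\varepsilon=x$, the compound Poisson piece satisfies $\sup_{[0,1/n]}Y^{(x)}_t-(Y^{(x)}_{1/n})^+>0$ only if the interval contains at least one positive and at least one negative jump, and since up- and down-jumps are independent Poisson processes with rates $\le Cx^{-\alpha}$, this gives the per-interval probability $O(n^{-2}x^{-2\alpha})$ and hence $O(n^{-1}x^{-2\alpha})$ after the union bound. The spectrally negative case for $\alpha<1$ is also fine (indeed slightly cleaner than the paper's, which still routes through the martingale term rather than using the deterministic bound $X_t-X_{i/n}\le b(t-i/n)$ directly).

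The gap is in the small-jump part of the general case. You assert the tail bound $\P(D_n>x)\le Cn^{-1}x^{-2\alpha}$ for $x\ge x_0\asymp(n^{-1}\log n)^{1/(2\alpha)}$, citing a ``sub-Gaussian small-jump tail'', but the only tool you actually invoke is Doob's $L^2$ inequality. That gives, with $\varepsilon=x$ and $\sigma_x^2\le Cx^{2-\alpha}$,
\[
\P\Bigl(\sup_{t\le 1/n}|R_t^{(x)}|>cx\Bigr)\le Cn^{-1}x^{-\alpha},
\]
and after the union bound over the $n$ intervals this is $O(x^{-\alpha})$, which does not decay in $n$ and swamps the claimed $n^{-1}x^{-2\alpha}$. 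A Bernstein-type exponential inequality for the bounded-jump martingale does not rescue this directly either: at threshold $\varepsilon=x$ the jumps are of the same scale as the deviation, so the sub-exponential term $\varepsilon a$ in the denominator dominates and the per-interval bound saturates at a constant. To make a tail-based argument work you would have to take $\varepsilon\ll x$ (losing a logarithmic factor in the large-jump rate and requiring a pigeonhole argument over the number of jumps, since individual large jumps are then no longer of size $\ge x$), and carefully balance the resulting terms.

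The paper sidesteps this entirely by working with moments rather than tails. It fixes a single threshold $\varepsilon=Cn^{-2/p}$ (independent of any tail level $x$), bounds $n\,\E\bigl[(\sup_{[0,1/n]}X_t^\varepsilon-(X_{1/n}^\varepsilon)^+)^p\bigr]$ via a conditional argument on the Poisson count $N=k$ (the analogue of your two-jump idea, but tracking the size of the two largest jumps among $k$), and controls $\E[(\max_i S_n^{(i)})^p]$ by the Jensen trick $\E[(\max_i S)^p]\le\bigl(n\,\E[(S_n^{(0)})^q]\bigr)^{p/q}$ with $q$ large together with the BDG bound on $\E[(S_n^{(0)})^q]$. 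This last step is exactly what absorbs the factor $n$ from the union bound; your $L^2$ approach cannot do this, and that is the missing idea.
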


\vspace{0.2in}

We will give the proof of this result later in Section \ref{sec:proofs-4.2}.
Note that for $p\!=\!1$, the general bound  in Theorem \ref%
{prop:Un} is  slightly sharper than Chen's result  for $\alpha\!<\!\frac{1}{2}$, is the same for $\alpha\!=\!\frac{1}{2}$, and is not as tight as Chen's result for $\frac{1}{2}\!<\! \alpha\!<\!2$; the spectrally negative bound is slightly sharper than Chen's result for $\alpha\!<\!1$, and the
bound is the same for $1\!\leq\! \alpha\!<\!2$. 


\section{MLMC analysis}

\subsection{Asian options}

We consider the analysis for a Lipschitz arithmetic Asian payoff 
$P=P( \overline{S})$ where 
\begin{equation*}
\overline{S}=S_{0}\ T^{-1}\int_{0}^{T}\exp \left( X_{t}\right) \D t\ .
\end{equation*}
and $P$ is Lipschitz such that $\vert P(S_{1})-P(S_{2})\vert
\leq L_{K}\vert S_{1}-S_{2}\vert $.

\bigskip We approximate the integral using a trapezoidal approximation:%
\begin{equation}
\overline{\hS}:=S_{0}\ T^{-1} \sum_{j=0}^{n-1}{\textstyle\frac{1}{2}}\,h\,\,\left(
\exp \left( X_{jh}\right) \!+\!\exp \left( X_{\left( j+1\right) h}\right)
\right) ,  \label{def:appro_ar_asian}
\end{equation}%
and the approximated payoff is then $\hP=P( \overline{\hS})$.

\begin{proposition}
\label{prop:arith_asian}Let $X$ be a scalar L\'{e}vy process underlying
an exponential L\'{e}vy model. If $\overline{\hS}, \overline{S}$ are as defined 
above, and $\int_{\{ |z| >1\}}e^{2z}\ \nu (\D z)<\infty $, then
\begin{equation*}
\mathbb{E}\left[ ( \overline{\hS}-\overline{S}) ^{2}\right] =\bO%
(h^{2}).
\end{equation*}
\end{proposition}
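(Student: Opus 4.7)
The plan is to exploit the i.i.d.\ structure of Levy increments on disjoint intervals together with a martingale-difference decomposition. First I would write
\[
\overline{\hS} - \overline{S} = -\frac{S_0}{T}\sum_{j=0}^{n-1} W_j, \qquad W_j := \int_{jh}^{(j+1)h}\exp(X_t)\,\D t - \frac{h}{2}\bigl(\exp(X_{jh}) + \exp(X_{(j+1)h})\bigr),
\]
and factor $W_j = \exp(X_{jh})\,\eta_j$ by shifting: $\{X_{jh+s}-X_{jh}\}_{s\in[0,h]}$ is a Levy process with the same law as $\{X_s\}_{s\in[0,h]}$ and is independent of $\mathcal{F}_{jh}$, so the $\eta_j$ are i.i.d.\ copies of a single random variable $\eta$, each independent of $\mathcal{F}_{jh}$.

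Next I would split $\eta_j = \mu + \tilde\eta_j$ with $\mu := \EE[\eta]$ and bound the two resulting sums separately via the $L^2$ triangle inequality. Let $\kappa(u) := \log\EE[\exp(uX_1)]$, which is finite for $u\in\{1,2\}$ under the moment hypothesis on $\nu$. The scalar $\mu$ is the classical trapezoidal-rule error on $[0,h]$ for the smooth function $s\mapsto\exp(s\kappa(1)) = \EE[\exp(X_s)]$, so $\mu = \bO(h^3)$; combined with $\|\sum_j \exp(X_{jh})\|_{L^2} \leq n\,\sqrt{\EE[\exp(2X_T)]} = \bO(h^{-1})$, the deterministic piece $\mu\sum_j\exp(X_{jh})$ has $L^2$-norm $\bO(h^2)$. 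For the stochastic piece $\sum_j\exp(X_{jh})\,\tilde\eta_j$, the martingale-difference structure ($\exp(X_{jh})$ is $\mathcal{F}_{jh}$-measurable while $\tilde\eta_j$ is mean-zero and independent of $\mathcal{F}_{jh}$) yields
\[
\EE\left[\left(\sum_j\exp(X_{jh})\,\tilde\eta_j\right)^2\right] = \sum_j \EE[\exp(2X_{jh})]\,\EE[\tilde\eta_j^2] \leq n\cdot\bO(1)\cdot \EE[\eta^2].
\]

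The main obstacle is obtaining the sharp estimate $\EE[\eta^2]=\bO(h^3)$. For this I would use the bridge identity
\[
\eta = \int_0^h R(s)\,\D s, \qquad R(s) := (\exp(X_s)-1) - (s/h)(\exp(X_h)-1),
\]
which writes the integrand as $\exp(X_s)$ minus its linear interpolant between $s=0$ and $s=h$ and hence vanishes at both endpoints. By Cauchy--Schwarz, $\EE[\eta^2]\leq h\int_0^h\EE[R(s)^2]\,\D s$, and the small-time expansion $\EE[(\exp(X_s)-1)^2] = \exp(s\kappa(2)) - 2\exp(s\kappa(1)) + 1 = \bO(s)$ together with the analogous bound $\EE[(\exp(X_s)-1)(\exp(X_h)-1)]=\bO(h)$ gives $\EE[R(s)^2]=\bO(h)$ uniformly on $[0,h]$, whence $\EE[\eta^2]=\bO(h^3)$. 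The essential point is this $\bO(s)$ refinement; a cruder $\bO(1)$ bound would only give $\EE[\eta^2]=\bO(h^2)$ and propagate to an overall error of $\bO(h)$. Combining the two pieces, $\EE[(\sum_j W_j)^2]=\bO(h^2)$ and hence $\EE[(\overline{\hS}-\overline{S})^2]=\bO(h^2)$.
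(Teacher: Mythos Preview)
Your proof is correct and proceeds somewhat differently from the paper's. The paper decomposes $\overline{S}-\overline{\hS}$ as $S_0 T^{-1}\bigl(\sum_j b_j I_j + R_A\bigr)$ with $b_j=\exp(X_{jh})$, $I_j=\int_{jh}^{(j+1)h}(\exp(X_t-X_{jh})-1)\,\D t$, and a global boundary correction $R_A=-\frac{h}{2}\exp(X_T)+\frac{h}{2}$; it then expands $(\sum_j b_j I_j)^2$ and computes both the diagonal terms (using $\mathbb{E}[I_j^2]=\bO(h^3)$ via Cauchy--Schwarz, essentially the same estimate you obtain for $\mathbb{E}[\eta^2]$) and the off-diagonal terms explicitly, exploiting the independence of $I_m$ from $b_m b_j I_j$ for $m>j$ together with $\mathbb{E}[I_m]=\bO(h^2)$. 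You instead keep the full trapezoidal correction inside each $\eta_j$ and center it: the mean part $\mu$ is identified as the deterministic trapezoidal error $\bO(h^3)$ for the smooth function $s\mapsto e^{s\kappa(1)}$, and the centered part forms a martingale-difference sum whose cross terms vanish automatically. Your route is cleaner structurally and sidesteps the explicit cross-term calculation; the paper's route is more computational but tracks constants throughout. Both hinge on the same $\bO(h^3)$ second-moment bound for the per-interval error.
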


The proof will be given later in Section \ref{sec:proofs-5.1}.  Using the Lipschitz property, the weak 
convergence for the numerical approximation is given by
\begin{equation*}
\left\vert \mathbb{E}[ \hP_\ell-P] \right\vert \leq
L_{K} \mathbb{E}\left [\vert \overline{\hS_\ell}-\overline{S} \vert \right]
 \leq L_{K}\left( \mathbb{E}[ ( \overline{\hS}-
\overline{S}) ^{2}] \right)^{1/2},
\end{equation*}
while the convergence of the MLMC variance follows from
\begin{eqnarray*}
V_{\ell }\ &\leq &\mathbb{E}\left[ ( \hP_{\ell}-\hP_{\ell-1}) ^{2}%
\right] \\
&\leq &2\ \mathbb{E}\left[ ( \hP_{\ell}-P)^{2}\right] +2\ \mathbb{E}%
\left[ ( \hP_{\ell-1}-P) ^{2}\right] \\
&\leq &2L_{K}^{2}\ \mathbb{E}\left[ ( \overline{\hS_{\ell}}-\overline{S}
) ^{2}\right] +2L_{K}^{2}\ \mathbb{E}\left[ ( \overline{\hS}_{\ell-1}-
\overline{S}) ^{2}\right] .
\end{eqnarray*}

\subsection{Lookback options}

\label{sec:lookback}

In exponential L\'{e}vy models, the moment generating function $\mathbb{E}%
\left[ \exp \left( q\sup_{0\leq t\leq T}X_{t}\right) \right]$ can be
infinite for large value of $q$. To avoid problems due to this, we consider
a lookback put option which has a bounded payoff
\begin{equation}
P = \exp(-r T) \, \left(K-S_0\exp(m)\right)^+,  \label{def:lookback_put}
\end{equation}
where $m = \sup_{0\leq t\leq T} X_t. $ Note that $P$ is a Lipschitz function
of $m$, since we have $\left\vert P^{\prime}(x) \right\vert \leq K$. Without
loss of generality, we assume $T=1$ in the following.

Because of the Lipschitz property, we have $\left| \mathbb{E}[ P\!-\!\hP%
_\ell] \right| \leq K\, \mathbb{E}[D_n] $ where $n\!=\!M^\ell\!=\!h_%
\ell^{-1} $. Therefore we obtain weak convergence for the processes covered
by Theorem \ref{thm:mean}, with the convergence rate given by the Theorem.

To analyse the variance, $V_{\ell }=\mathbb{V}[ \hP_{\ell }\!-\!\hP%
_{\ell-1}] $, we first note that
\begin{equation*};
0 \leq \max_{0\leq i \leq M^\ell} X_{i/M^\ell} - \max_{0\leq i \leq
M^{\ell-1}} X_{i/M^{\ell-1}} \leq \sup_{0\leq t \leq 1} X_t - \max_{0\leq i
\leq M^{\ell-1}} X_{i/M^{\ell-1}} = D_n
\end{equation*}
where $n=M^{\ell -1}$. Hence, we have
\begin{equation*}
V_{\ell } \ \leq\ \mathbb{E}\left[ ( \hP_{\ell }-\hP_{\ell -1})
^{2}\right] \ \leq\ K^{2}\ \mathbb{E}[D_{n}^{2}].
\end{equation*}
Theorem \ref{prop:Un} then provides the following bounds on the variance for
the VG, NIG and spectrally negative $\alpha $-stable processes.

\begin{proposition}
\label{prop:lookback} Let $X$ be a scalar L\'{e}vy process underlying an
exponential L\'{e}vy model. For the Lipschitz lookback put payoff (\ref%
{def:lookback_put}), we have the following multilevel variance convergence
rate results:

\begin{enumerate}
\item If $X$ is a Variance Gamma (VG) process, then $V_\ell =\bO\left(
h_{\ell }\right)$;

\item If $X$ is a Normal Inverse Gaussian (NIG) process, then $V_\ell = %
\bO\left( h_{\ell }\left\vert \log h_{\ell }\right\vert \right)$;

\item If $X$ is a spectrally negative $\alpha $-stable process with $%
\alpha >1$, then $V_\ell = \bo\left( h_{\ell }^{2/\alpha -\delta }\right)$,
for any small $\delta >0.$
\end{enumerate}
\end{proposition}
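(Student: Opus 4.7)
The plan is to reduce each case to a direct application of Theorem \ref{prop:Un} with $p=2$, starting from the bound $V_\ell \leq K^2 \, \mathbb{E}[D_n^2]$ already derived above with $n = M^{\ell-1}$, so that $n^{-1}$ is comparable to $h_\ell$. Because the payoff in (\ref{def:lookback_put}) is $K$-Lipschitz in the supremum, $V_\ell$ is controlled by the second moment of the discretisation error of the running maximum, and the three cases of the proposition differ only in the Blumenthal--Getoor index $\alpha$ and in whether the spectrally-negative branch of Theorem \ref{prop:Un} is applicable.

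For VG, I would verify the hypotheses of Theorem \ref{prop:Un} by noting that $\nu(x) = \frac{1}{\kappa |x|}\, e^{A-B|x|}$ satisfies $C_2|x|^{-1} \leq \nu(x) \leq C_1|x|^{-1}$ on $|x|\leq 1$ (since $e^{A-B|x|}$ lies between $e^{A-B}$ and $e^A$ there), giving $\alpha = 0$; and on $|x| > 1$ the bound $\nu(x) \leq (e^A/\kappa)\, e^{-B|x|}$ is dominated by $e^{-C_3|x|}$ for any $0 < C_3 < B$ once $|x|$ is large enough. Theorem \ref{prop:Un} with $p = 2 > 2\alpha = 0$ then gives $\mathbb{E}[D_n^2] = \bO(1/n)$, hence $V_\ell = \bO(h_\ell)$. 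For NIG, the Lévy measure $\nu(x) = \frac{C}{\kappa|x|}\, e^{Ax}\, K_1(B|x|)$ satisfies $\nu(x) \asymp |x|^{-2}$ as $|x| \to 0$ because $K_1(B|x|) \sim 1/(B|x|)$, so $\alpha = 1$; the large-$x$ asymptotics $K_1(B|x|) \sim \sqrt{\pi/(2B|x|)}\, e^{-B|x|}$ combine with $e^{Ax}$ to give exponential decay in both directions provided $|A| < B$, which follows from the identity $B^2 - A^2 = 1/(\kappa \sigma^2) > 0$. Theorem \ref{prop:Un} with $p = 2 = 2\alpha$ then yields $\mathbb{E}[D_n^2] = \bO(\log n / n)$, so $V_\ell = \bO(h_\ell\,|\log h_\ell|)$.

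For the spectrally negative $\alpha$-stable case with $1 < \alpha < 2$, the Lévy measure $\nu(x) = B |x|^{-1-\alpha}\, \ii_{\{x<0\}}$ has exactly the power-law behaviour required on $|x|\leq 1$, and the spectrally-negative branch of Theorem \ref{prop:Un} with $p=2$ immediately gives $\mathbb{E}[D_n^2] = \bo(n^{-2/\alpha + \delta})$, hence $V_\ell = \bo(h_\ell^{2/\alpha - \delta})$. The main obstacle I anticipate is that the tail of the stable Lévy measure is power-law rather than exponential, so the $|x| > 1$ hypothesis of Theorem \ref{prop:Un} is not literally satisfied. A truncation argument is needed: split off the jumps of size exceeding $1$ from the remainder (whose compactly truncated Lévy measure does satisfy the hypotheses of Theorem \ref{prop:Un}), apply the theorem to the truncated process, and estimate the contribution of the discarded finite-activity large-jump part by a direct compound-Poisson computation, showing it contributes only an $\bO(1/n)$ term to $\mathbb{E}[D_n^2]$ that is absorbed into the main $\bo(n^{-2/\alpha + \delta})$ bound.
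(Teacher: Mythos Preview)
Your approach is exactly the paper's: reduce $V_\ell$ to $K^2\,\mathbb{E}[D_n^2]$ via the Lipschitz bound, then read off the three cases from Theorem~\ref{prop:Un} with $p=2$ (using $p>2\alpha$ for VG, $p=2\alpha$ for NIG, and the spectrally-negative branch for the $\alpha$-stable case). The paper does not spell out the hypothesis verification you give; your checks for VG and NIG are correct.

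Your flagged concern about the $\alpha$-stable tail is legitimate --- the power-law tail does not satisfy the stated hypothesis $\nu(x)\le e^{-C_3|x|}$ for $|x|>1$ --- but the truncation/compound-Poisson workaround you sketch is more than is needed. If you inspect the proof of Theorem~\ref{prop:Un}, the exponential-tail hypothesis enters only through the estimates for $\lambda_x$ on $x>1$ and for $L_\varepsilon(p)$, both of which feed exclusively into the bound on $\sup X_t^\varepsilon-(X_{1/n}^\varepsilon)^+$. In the spectrally-negative case that term vanishes identically (no positive jumps), so only the drift term and Proposition~\ref{prop:Sn} are used, and neither requires anything about $\nu$ on $|x|>1$. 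Hence the spectrally-negative conclusion of Theorem~\ref{prop:Un} holds for the $\alpha$-stable process as written, and no truncation is needed. (The same observation disposes of the related issue that the two-sided lower bound $C_2|x|^{-1-\alpha}\le\nu(x)$ fails on $0<x\le 1$ for a spectrally-negative process: the lower bound is only used, via Proposition~\ref{prop:Sn}, to control $\varepsilon^{-\alpha}/n$, and a one-sided lower bound on $x<0$ suffices for that.)
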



\subsection{Barrier options}

We consider a bounded up-and-out barrier option with discounted payoff
\begin{equation}
P\ =\ \exp (-rT)\,f(S_{T}\!)\ \mathbf{1}_{\left\{ \sup_{0<t<T}S_t<B\right\} }
 \ =\ \exp (-rT)\,f(S_{T}\!)\ \mathbf{1}_{\left\{ m<\log(B/S_0)\right\} },
\label{def:barrier}
\end{equation}%
where again $m=\sup_{0<t<T}X_t$, and $\left| f\left( x\right) \right|
\!\leq\! F$ is bounded. On level $\ell$, the numerical approximation is
\begin{equation}
\hP_\ell = \exp (-rT)\,\,f(S_{T}\!)\ \mathbf{1}_{\left\{ \widehat{m}_\ell<\log(B/S_0)\right\} }.  \label{def:appr_barr}
\end{equation}
where $\widehat{m}_\ell=\max_{0\leq i \leq M^{\ell}}X_{i h_\ell} $.

Our analysis for NIG and the spectrally negative $\alpha$-stable processes
requires the following quite general result.

\begin{proposition}
\label{prop:avikainen} If $m$ is a random variable with a locally bounded density
in a neighbourhood of $B$, and $\widehat{m}$ is a numerical approximation to $m$, then for any $%
p>0$ there exists a constant $C_p(B)$ such that
\begin{equation*}
\mathbb{E}\left[\, \left|\mathbf{1}_{\{ m<B\}} - \mathbf{1}_{\{ \widehat{m}%
<B\}} \right|\, \right] < C_p(B) \ \| m - \widehat{m} \|_p^{p/(p+1)}.
\end{equation*}
\end{proposition}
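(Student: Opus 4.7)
The plan is to use the classical Avikainen-type splitting argument, exploiting the fact that $\mathbf{1}_{\{m<B\}}$ and $\mathbf{1}_{\{\widehat{m}<B\}}$ can differ only when $m$ and $\widehat{m}$ lie on opposite sides of $B$. I will introduce a threshold $\delta>0$ to be chosen at the end, and decompose the event on which the indicators disagree according to whether $|m-\widehat{m}|$ is less than or at least $\delta$.

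First I would observe that
\begin{equation*}
\left|\mathbf{1}_{\{m<B\}} - \mathbf{1}_{\{\widehat{m}<B\}}\right|
\ \leq\ \mathbf{1}_{\{|m-B|<\delta\}} + \mathbf{1}_{\{|m-\widehat{m}|\geq\delta\}},
\end{equation*}
because if the two indicators differ and $|m-\widehat{m}|<\delta$, then $m$ must lie within distance $\delta$ of $B$. Taking expectations and using that $m$ has a bounded density $\rho$ on the first term,
\begin{equation*}
\mathbb{P}(|m-B|<\delta) \ \leq\ 2\,\|\rho\|_\infty\,\delta,
\end{equation*}
while Markov's inequality applied to $|m-\widehat{m}|^p$ gives
\begin{equation*}
\mathbb{P}(|m-\widehat{m}|\geq\delta) \ \leq\ \delta^{-p}\,\|m-\widehat{m}\|_p^p.
\end{equation*}
Combining these yields the upper bound $2\|\rho\|_\infty\,\delta + \delta^{-p}\|m-\widehat{m}\|_p^p$.

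Finally I would balance the two terms by choosing $\delta$ proportional to $\|m-\widehat{m}\|_p^{p/(p+1)}$, specifically $\delta = \|m-\widehat{m}\|_p^{p/(p+1)}$ (up to a constant depending on $p$ and $\|\rho\|_\infty$), so that both contributions are of order $\|m-\widehat{m}\|_p^{p/(p+1)}$. This produces the desired constant $C_p$ depending on $p$ and $\|\rho\|_\infty$.

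There is no real obstacle in this proof: the splitting inequality is immediate from the definition of the indicators, Markov's inequality is standard, and the bounded-density assumption trivially controls the probability that $m$ falls in a small neighbourhood of $B$. The only subtlety is the optimisation over $\delta$, which is a routine calculus step; the exponent $p/(p+1)$ is exactly what one obtains from balancing a linear term in $\delta$ against a term of order $\delta^{-p}$.
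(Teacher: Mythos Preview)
Your proof is correct and essentially identical to the paper's: both use the same splitting into the events $\{|m-B|\leq \delta\}$ and $\{|m-\widehat{m}|\geq \delta\}$, bound the first via the bounded density and the second via Markov's inequality, and then optimise over $\delta$ to obtain the exponent $p/(p+1)$. The only cosmetic difference is that the paper finds the optimal threshold by differentiating the upper bound, whereas you simply balance the two terms.
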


\begin{proof}
This  result was first proved by Avikainen (Lemma 3.4 in \cite{avikainen09}),
but we give here a simpler proof. If, for some fixed $X\!>\!0$, we have $|m
\!-\! B| > X$ and $|m \!-\! \widehat{m}| < X$, then $\mathbf{1}_{m<B} -
\mathbf{1}_{\widehat{m}<B} = 0$. Hence,
\begin{eqnarray*}
\mathbb{E}[\, \left| \mathbf{1}_{m<B} - \mathbf{1}_{\widehat{m}<B}
\right|\,] & \leq & \P[|m \!-\! B| \leq X] + \P[|m \!-\!
\widehat{m}| \geq X] \\
& \leq & 2\, \rho_{sup}(B)\, X + X^{-p} \ \| m - \widehat{m} \|^p_p
\end{eqnarray*}
with the first term being due to the local bound $\rho_{sup}(B)$ of $m$'s density and the second term due
to the Markov inequality. Differentiating the upper bound w.r.t.~$X$, we
find that it is minimised by choosing $X^{p+1} = \frac{p}{2\,\rho_{sup}(B)} \
\| m - \widehat{m} \|^p_p, $ and we then get the desired bound.
\end{proof}

\bigskip

Our analysis for the Variance Gamma process requires a sharper result
customised to the properties of L\'{e}vy processes.

\begin{proposition}
\label{prop:VG-barrier} If $X_t$ is a scalar pure jump L\'{e}vy process
satisfying the conditions of Theorem \ref{prop:Un} with $0\leq \alpha \leq{\ %
\textstyle \frac{1}{2} }$,  and $m$ and $\widehat{m}_n$ are the continuously
and discretely monitored suprema of $X_t$ and $m$ has a locally bounded 
density in a neighbourhood of $B$, then
\begin{equation*}
\mathbb{E}\left[\, \left|\mathbf{1}_{\{ m<B\}} - \mathbf{1}_{\{ \widehat{m}%
<B\}} \right|\, \right] = \bo(n^{-1/(1+2\alpha) + \delta}),
\end{equation*}
for any $\delta>0$.
\end{proposition}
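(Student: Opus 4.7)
The plan begins with the monotonicity $\widehat{m}_n \leq m$, which collapses the indicator difference to the one-sided event $\{\widehat{m}_n < B \leq m\}$. For any $X > 0$ this event is contained in $\{0 \leq m - B \leq X\} \cup \{D_n \geq X\}$ where $D_n = m - \widehat{m}_n$, yielding the Avikainen-style decomposition
$$\mathbb{E}\left[\left|\mathbf{1}_{\{m<B\}} - \mathbf{1}_{\{\widehat{m}_n<B\}}\right|\right] \leq \mathbb{P}(0 \leq m - B \leq X) + \mathbb{P}(D_n \geq X).$$
The first term is controlled by a bounded density of $m$ in a neighbourhood of $B$: under hypothesis (\ref{ass:prop:Un}) the marginal law of $X_t$ is absolutely continuous with bounded density, and this regularity transfers to the supremum $m$ via the Wiener--Hopf factorisation, giving $\mathbb{P}(0 \leq m - B \leq X) \leq \rho_{\sup}\, X$ with $\rho_{\sup}$ independent of $n$ and $X$.

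The delicate step is the second term, and it is precisely here that a sharper argument than Proposition \ref{prop:avikainen} is needed: the naive route of combining Markov's inequality with the moment bound $\mathbb{E}[D_n^p] = \bO(1/n)$ from Theorem \ref{prop:Un} produces only $\bO(n^{-1/(p+1)})$, which saturates at $\bO(n^{-1/2})$ (at $p=1$) and cannot reach $\bo(n^{-1+\delta})$ in the VG case $\alpha = 0$. I therefore plan to replace the bare Markov step by a sharper tail estimate of the form
$$\mathbb{P}(D_n \geq X) = \bO\bigl( n^{-1}\, X^{-2\alpha-\delta'} \bigr)$$
(up to logarithmic factors, for arbitrarily small $\delta' > 0$), to be obtained by a small-jump/large-jump decomposition of $X_t$ at a cut-off $\eta \asymp X$. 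The large-jump component is a compound Poisson process whose contribution is controlled through $\nu(\{|x|>\eta\})$ and the hypothesised exponential tails, while the small-jump component admits sharp $L^q$ martingale estimates of the type underpinning Theorem \ref{prop:Un} when $p$ is taken slightly larger than $2\alpha$; the hypothesis $\alpha \leq \fracs{1}{2}$ keeps both components in a regime where these estimates go through without further corrections.

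Balancing $\rho_{\sup} X$ against $n^{-1} X^{-2\alpha - \delta'}$ by choosing $X = n^{-1/(1+2\alpha+\delta')}$ yields the combined bound $\bO(n^{-1/(1+2\alpha+\delta')})$, which is $\bo(n^{-1/(1+2\alpha) + \delta})$ for any preassigned $\delta > 0$ once $\delta'$ is chosen sufficiently small. The main obstacle I anticipate is the rigorous derivation of this sharp tail bound on $D_n$: it requires tracking the interplay between the truncation parameter $\eta$ and the threshold $X$, and combining martingale estimates with an elementary union bound over the $n$ subintervals, so the restriction $\alpha \leq \fracs{1}{2}$ appears to be essential to the argument rather than cosmetic. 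A secondary but non-trivial ingredient is the bounded density of the supremum $m$ near $B$, which can be obtained by a Fourier estimate on the characteristic exponent of $X_t$ combined with the Wiener--Hopf factorisation or Spitzer's identity for the ladder-height distribution.
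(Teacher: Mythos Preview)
Your overall strategy matches the paper's: the same Avikainen-style splitting into $\{B\leq m\leq B+X\}$ and $\{D_n>X\}$, the same small-jump/large-jump decomposition of $X_t$ at a cut-off, the large-jump contribution controlled by the two-jump probability $\lambda_\varepsilon^2/n$, the small-jump contribution via the $L^q$ martingale estimates of Proposition~\ref{prop:Sn}, and a final balancing of terms.  There is, however, one substantive gap and two clarifications.

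The cut-off $\eta\asymp X$ does not deliver the tail bound you claim.  With $\eta=cX$, Proposition~\ref{prop:Sn} gives $\mathbb{E}\bigl[(\max_i S_n^{(i)})^q\bigr]\prec \eta^{\,q-\delta''}$, so Markov yields only
\[
\mathbb{P}\!\left(\max_i S_n^{(i)}>\fracs{1}{4}X\right)\ \prec\ \eta^{\,q-\delta''}X^{-q}\ \asymp\ X^{-\delta''},
\]
which carries no decay in $n$ and is useless once $X=n^{-r}$.  The paper instead takes $\varepsilon$ a small power smaller than $X$, namely $\varepsilon=X^{(p+1)/(p-\delta)}$, which forces the small-jump probability down to $\bO(X)$ and matches it to the density term; the large-jump term then reads $\lambda_\varepsilon^2/n\prec X^{-2\alpha(p+1)/(p-\delta)}/n$, and a second balance fixes $r$ with $r\to 1/(1+2\alpha)$ as $p\to\infty$, $\delta\to 0$.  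Your target tail estimate $\mathbb{P}(D_n\geq X)=\bO(n^{-1}X^{-2\alpha-\delta'})$ \emph{is} reachable, but only with $\eta=X^{1+\gamma}$ for small $\gamma>0$ and $q$ taken large enough; the choice $\eta\asymp X$ is the one place the sketch would actually fail.

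Second, the paper does not derive the bounded density of $m$: it is taken as an assumption (stated explicitly in Proposition~\ref{prop:barrier}, which is where Proposition~\ref{prop:VG-barrier} is used).  Your proposed derivation via Wiener--Hopf and Fourier estimates may well succeed, but it is a separate and non-trivial project, not needed here.

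Third, the restriction $\alpha\leq\fracs{1}{2}$ is not essential to the mechanics.  The paper's argument runs for any $\alpha<1$, this being needed only to keep the compensated drift $\mu_\varepsilon$ bounded so that $|\mu_\varepsilon|/n$ is negligible.  The restriction in the statement merely marks the range where the resulting rate $n^{-1/(1+2\alpha)}$ improves on the generic $n^{-1/2}$ from Proposition~\ref{prop:avikainen}.
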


The proof is given later in Section \ref{sec:proofs-5.4}.

\bigskip
Both of the above propositions require the condition that the supremum 
$m$ has a locally bounded density for all strictly positive values.
There is considerable current research on the supremum of 
L\'{e}vy processes \cite{chaumont13,cm15,kuznetsov11,kmr13}.
In particular, the comments following Proposition 2 in \cite{cm15}
indicate that the condition is satisfied by stable processes, and by 
a wide class of symmetric subordinated Brownian motions.  
Unfortunately, the VG and NIG processes in the current paper are 
not symmetric, so at present they lie outside the range of current
theory, but new theory under development \cite{blanchet15} will 
extend the property to a larger class of L{\'e}vy processes 
including both VG and NIG.

We now bound the weak convergence of the estimator and the multilevel
variance convergence.


\begin{proposition}
\label{prop:barrier} Let $X$ be a scalar L\'{e}vy process underlying an
exponential L\'{e}vy model. For the up-and-out barrier option payoff (\ref%
{def:barrier}), with the numerical approximation (\ref{def:appr_barr}), we
have the following rates of convergence for the multilevel correction
variance and the weak error, assuming that $m$ has a bounded density:

\begin{itemize}
\item If $X$ is a Variance Gamma (VG) process, then
\begin{eqnarray*}
V_\ell &=&\bo(h_{\ell }^{1-\delta }); \\
\left\vert \mathbb{E}\left[ \hP-P\right] \right\vert &=&\bo(h_{\ell
}^{1-\delta })
\end{eqnarray*}%
where $\delta $ is an arbitrary positive number.

\item If $X$ is a NIG process, then
\begin{eqnarray*}
V_\ell &=&\bo(h_{\ell }^{1/2-\delta }); \\
\left\vert \mathbb{E}\left[ \hP-P\right] \right\vert &=&\bo(h_{\ell
}^{1/2-\delta })
\end{eqnarray*}%
where $\delta $ is an arbitrary positive number.

\item If $X$ is a spectrally negative $\alpha $-stable process with $%
\alpha >1$, then%
\begin{eqnarray*}
V_\ell &=&\bo\left( h_{\ell }^{\frac{1}{\alpha }-\delta }\right) ; \\
\left\vert \mathbb{E}\left[ \hP-P\right] \right\vert &=&\bo\left( h_{\ell }^{%
\frac{1}{\alpha }-\delta }\right)
\end{eqnarray*}%
where $\delta $ is an arbitrary positive number.
\end{itemize}
\end{proposition}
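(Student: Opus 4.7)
The plan is to reduce both quantities in the proposition to a single scalar object and then analyse it separately in each of the three cases. Write $n = h_\ell^{-1}$. Since $|f| \le F$ and $\hat m_\ell \le m$, the indicator difference $\mathbf{1}_{\{\hat m_\ell < B\}} - \mathbf{1}_{\{m < B\}}$ takes only the values $0$ and $1$, so its square equals itself. The pointwise bound $|\hP_\ell - P| \le F e^{-rT}|\mathbf{1}_{\{\hat m_\ell < B\}} - \mathbf{1}_{\{m < B\}}|$ therefore yields
\begin{equation*}
\bigl|\mathbb{E}[\hP_\ell - P]\bigr| \le F e^{-rT}\, q_\ell, \qquad \mathbb{E}\bigl[(\hP_\ell - P)^2\bigr] \le F^2 e^{-2rT}\, q_\ell,
\end{equation*}
where $q_\ell := \mathbb{E}\bigl[\,|\mathbf{1}_{\{\hat m_\ell < B\}} - \mathbf{1}_{\{m < B\}}|\,\bigr]$. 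Combined with $V_\ell \le 2\mathbb{E}[(\hP_\ell - P)^2] + 2\mathbb{E}[(\hP_{\ell-1} - P)^2]$, the proposition reduces to proving $q_\ell = \bo(h_\ell^{\gamma - \delta})$ with $\gamma = 1, 1/2, 1/\alpha$ in the three cases respectively.

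For VG the Blumenthal-Getoor index is $0$, and the L\'evy density $\nu(x) = \kappa^{-1}|x|^{-1} e^{A-B|x|}$ satisfies the hypotheses of Theorem \ref{prop:Un} with $\alpha = 0$, so Proposition \ref{prop:VG-barrier} applies directly and gives $q_\ell = \bo(n^{-1+\delta})$. For NIG the BG index is $1$, outside the range of Proposition \ref{prop:VG-barrier}; here I would invoke Proposition \ref{prop:avikainen} at exponent $p = 1$ to obtain $q_\ell \le C_1 (\mathbb{E}[D_n])^{1/2}$. Since NIG is a pure-jump, infinite-variation L\'evy process with BG index $1$, case 3 of Chen's Theorem \ref{thm:mean} provides $\mathbb{E}[D_n] = \bO(n^{-1+\delta})$, whence $q_\ell = \bo(n^{-1/2 + \delta})$. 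Note that Theorem \ref{prop:Un} applied at $p = 1$ would only deliver the much weaker $\bO((\log n/n)^{1/2})$, so using Chen's sharper bound is essential.

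For the spectrally negative $\alpha$-stable process with $\alpha \in (1, 2)$ I would again apply Proposition \ref{prop:avikainen}, but with a large exponent $p$. The spectrally negative clause of Theorem \ref{prop:Un} yields $\mathbb{E}[D_n^p] = \bo(n^{-p/\alpha + \eta})$ for every $\eta > 0$, so $\|D_n\|_p = \bo(n^{-1/\alpha + \eta/p})$, and Proposition \ref{prop:avikainen} then gives
\begin{equation*}
q_\ell = \bo\!\left( n^{-p/(\alpha(p+1)) + \eta/(p+1)} \right).
\end{equation*}
For any prescribed $\delta > 0$ I would first fix $\eta$ small and then choose $p$ large enough that $p/(\alpha(p+1)) - \eta/(p+1) \ge 1/\alpha - \delta$, which is possible because the left side converges to $1/\alpha$ as $p \to \infty$; this yields $q_\ell = \bo(h_\ell^{1/\alpha - \delta})$ as required.

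The main technical obstacle is the $\alpha$-stable step: Theorem \ref{prop:Un} is stated under an exponential-tail hypothesis on $\nu$ which the $\alpha$-stable L\'evy measure violates. My plan to bridge this is to decompose $X$ into its compensated small-jump part, which satisfies the power-law assumption on $[-1,1]$ and has trivially bounded tails after truncation, plus an independent compound Poisson large-jump component; the latter produces only finitely many events on $[0, T]$, contributing at most $\bO(n^{-1})$ to $D_n$ in every $L^p$ norm, which is absorbed by the dominant small-jump rate. The bounded-density hypothesis on $m$ already assumed in the proposition supplies the remaining input needed for Proposition \ref{prop:avikainen}.
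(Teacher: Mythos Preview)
Your reduction of both the weak error and $V_\ell$ to the single quantity $q_\ell = \mathbb{E}\bigl[\,|\mathbf{1}_{\{\hat m_n < B\}} - \mathbf{1}_{\{m < B\}}|\,\bigr]$, and your treatment of the VG and spectrally negative $\alpha$-stable cases via Propositions~\ref{prop:VG-barrier} and~\ref{prop:avikainen} respectively, coincide exactly with the paper's argument. The one substantive difference is the NIG step: the paper invokes Theorem~\ref{prop:Un} at $p=1$, but you are right that for $\alpha=1$ this only yields $\mathbb{E}[D_n]=\bO\bigl((\log n/n)^{1/2}\bigr)$, which after Proposition~\ref{prop:avikainen} gives $q_\ell=\bO\bigl((\log n/n)^{1/4}\bigr)$, strictly short of the asserted $\bo(h_\ell^{1/2-\delta})$. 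Your substitution of Chen's Theorem~\ref{thm:mean}, case~3, giving $\mathbb{E}[D_n]=\bO(n^{-1+\delta})$, is exactly what is needed to reach the stated rate, so your route here is actually a repair of the paper's.

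Your concern about the $\alpha$-stable tail is legitimate --- the exponential-tail hypothesis of Theorem~\ref{prop:Un} does fail --- but the decomposition you propose is heavier than necessary and the claim that the compound Poisson piece contributes $\bO(n^{-1})$ in every $L^p$ would need more care (the interaction of the two pieces inside $D_n$ is not additive). The simpler fix is to observe that in the spectrally-negative branch of the proof of Theorem~\ref{prop:Un} the term $\sup_{[0,1/n]}X_t^\varepsilon-(X_{1/n}^\varepsilon)^+$ vanishes identically (no positive jumps), so Theorem~\ref{thm:main}, and with it the exponential-tail hypothesis, is never invoked; only the drift bound~(\ref{estimate:mu}) and Proposition~\ref{prop:Sn} are used, and both involve $\nu$ only on $|z|\le 1$. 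The two-sided lower bound on $\nu$ also formally fails for $x>0$, but it enters only through the inequality $\lambda_\varepsilon\ge c\,\varepsilon^{-\alpha}$, which the one-sided $\alpha$-stable density already provides. Hence the spectrally-negative conclusion of Theorem~\ref{prop:Un} applies to the $\alpha$-stable process as written, and your Avikainen-with-large-$p$ argument then goes through unchanged.
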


\begin{proof}
The variance of the multilevel correction term is bounded by
\begin{equation*}
V_\ell\ \leq \ \mathbb{E}\left[ ( \hP_{\ell }-\hP_{\ell -1}) ^{2}%
\right]\ \leq\ 2\ \mathbb{E}\left[ ( \hP_{\ell }-P) ^{2}\right]
+2\ \mathbb{E}\left[ ( \hP_{\ell -1}-P) ^{2}\right] .
\end{equation*}
For an up-and-out Barrier option, since the payoff is bounded we have
\begin{eqnarray*}
\mathbb{E}\left[ ( \hP_\ell-P) ^{2}\right] &\leq & F^2\ \mathbb{E}%
\left[ \mathbf{1}_{\{\widehat{m}_{n}<\log(B/S_0)\}}-\mathbf{1}_{\left\{ m<\log(B/S_0)\right\} }%
\right] , \\
\left\vert \mathbb{E}[ \hP_\ell-P] \right\vert &\leq & F \
\mathbb{E}\left[ \mathbf{1}_{\{\widehat{m}_{n}<\log(B/S_0)\}}-\mathbf{1}_{\left\{
m<\log(B/S_0)\right\} }\right],
\end{eqnarray*}
where $n=M^\ell$.

The bounds for the VG process come from Proposition \ref{prop:VG-barrier}
together with the results from Theorem \ref{prop:Un}.

The bounds for the NIG come from  taking $p\!=\!1$ in Proposition \ref{prop:avikainen} 
  together with Chen's result Theorem \ref{thm:mean}. 

The bounds for the spectrally
negative $\alpha $-stable process come from Proposition \ref{prop:avikainen}
together with the results from Theorem \ref{prop:Un}. Theorem \ref{prop:Un} gives
\begin{equation*}
\| m \!-\! \widehat{m} \|_p^{p/(p+1)} \ \equiv \ \left( \mathbb{E}\left[\,
|m \!-\! \widehat{m}|^p \right] \right)^{1/(p+1)} \ = \ \bo( h^{\frac{p}{%
(p+1)\alpha} - \frac{\delta}{p+1}} ).
\end{equation*}
We then obtain the desired bound by taking $p$ to be sufficiently large.
\end{proof}

\section{Numerical results}

We have numerical results for three different L\'{e}vy models: Variance
Gamma, Normal Inverse Gaussian and $\alpha $-stable processes, and three
different options: Asian, lookback and barrier. 

The current code is based on Giles' MATLAB code \cite{giles08b}, using 
which we generate standardised numerical results and a set of four figures.
The top two plots correspond to a set of experiments to investigate how
the variance and mean for both $\hP_{\ell}$ and $\hP_{\ell }\!-\!\hP_{\ell-1}$
vary with level $\ell$.
The top left plot shows the values for $\log_2(\mbox{variance})$,
so that the absolute value of the slope of the line for
$\log_2 \VV[\hP_{\ell }\!-\!\hP_{\ell-1}]$ indicates the convergence rate 
$\beta$ of $V_\ell$ in condition i) of Thereom \ref{thm:cc}.  
Similarly, the absolute value of the slope of the line for
$\log_2 |\EE[\hP_{\ell }\!-\!\hP_{\ell-1}]|$ in the top right plot
indicates the weak convergence rate $\alpha$ in the condition i) of 
Thereom \ref{thm:cc}. 

The bottom two plots correspond to a set of MLMC calculations for
different values of the desired accuracy $\eps$.
Each line in the bottom left plot corresponds to one multilevel 
calculation and displays the number of samples $N_\ell$ on each level. 
Note that as $\eps$ is varied, the MLMC algorithm automatically decides 
how many levels are required to reduce the weak error appropriately.  
The optimal number of samples on each level is based on an empirical 
estimation of the multilevel correction variance $V_\ell$, together 
with the use of a Lagrange multiplier to determine how best to minimise 
the overall computational cost for a given target accuracy. 
A complete description of the algorithm is given in \cite{giles15}. 
The bottom right plots show the variation of the computational complexity 
$C$ with the desired accuracy $\eps$. In the best cases, the MLMC complexity 
is $\bO(\eps^{-2})$, and therefore the plot is of $\eps^2\,C$ versus $\eps$ 
so that we can see whether this is achieved, and compare the complexity 
to that of the standard Monte Carlo method.

\begin{figure}[t!]
\begin{center}
\includegraphics[width=0.9\textwidth]{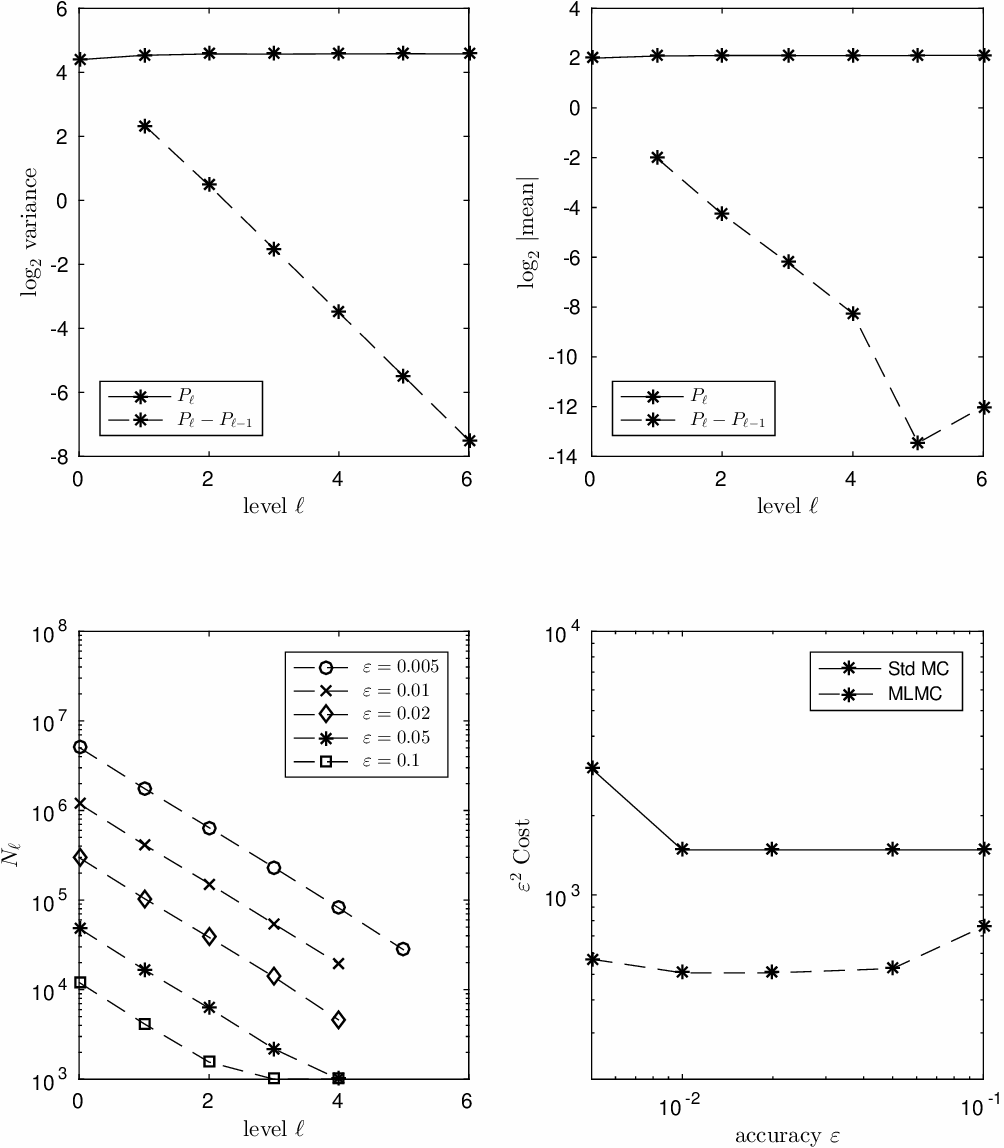}
\end{center}
\par
\vspace{-.1in}
\caption{Asian option in variance gamma model}
\label{vg_a}
\end{figure}

\begin{figure}[t!]
\begin{center}
\includegraphics[width=0.9\textwidth]{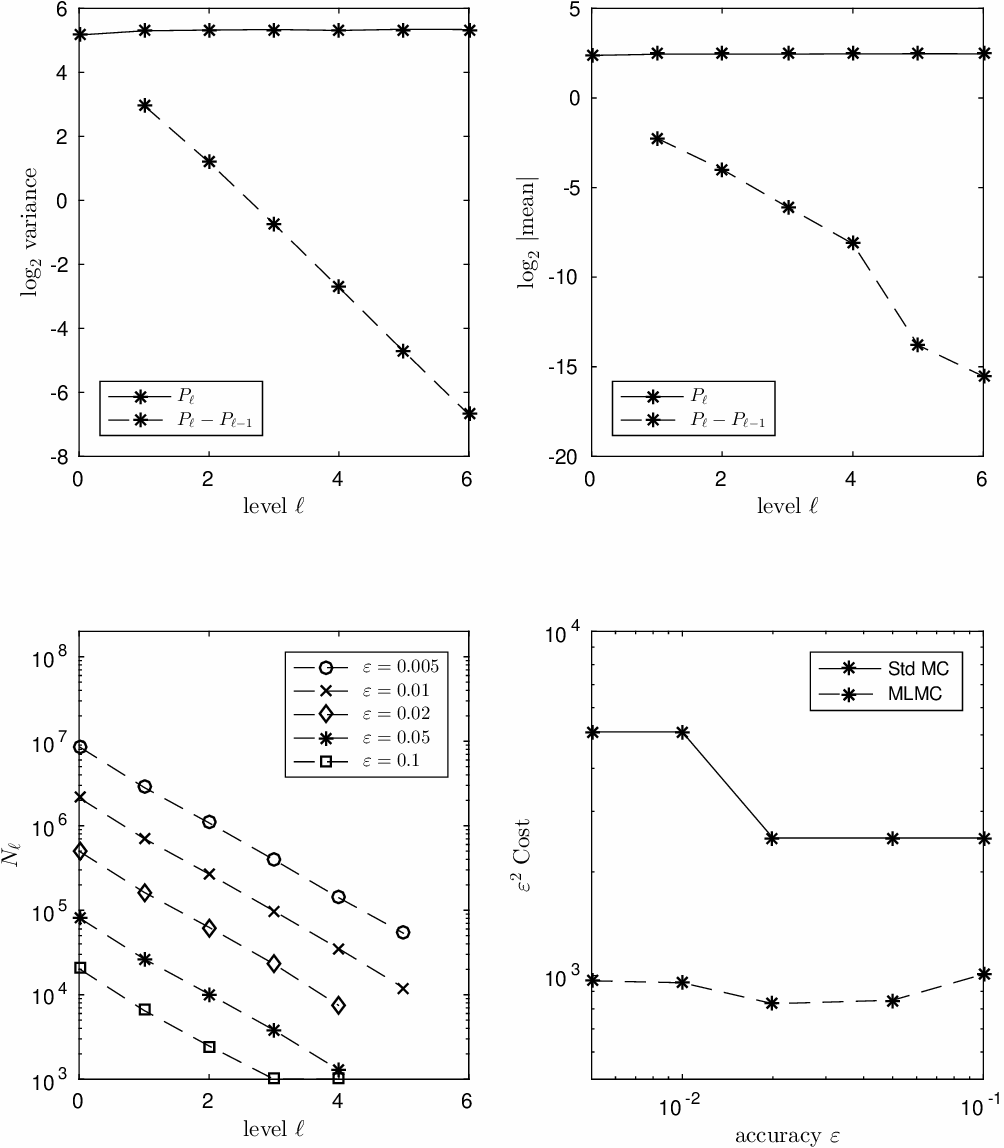}
\end{center}
\par
\vspace{-.1in}
\caption{Asian option in Normal Inverse Gaussian model}
\label{nig_a}
\end{figure}

\begin{figure}[t!]
\begin{center}
\includegraphics[width=0.9\textwidth]{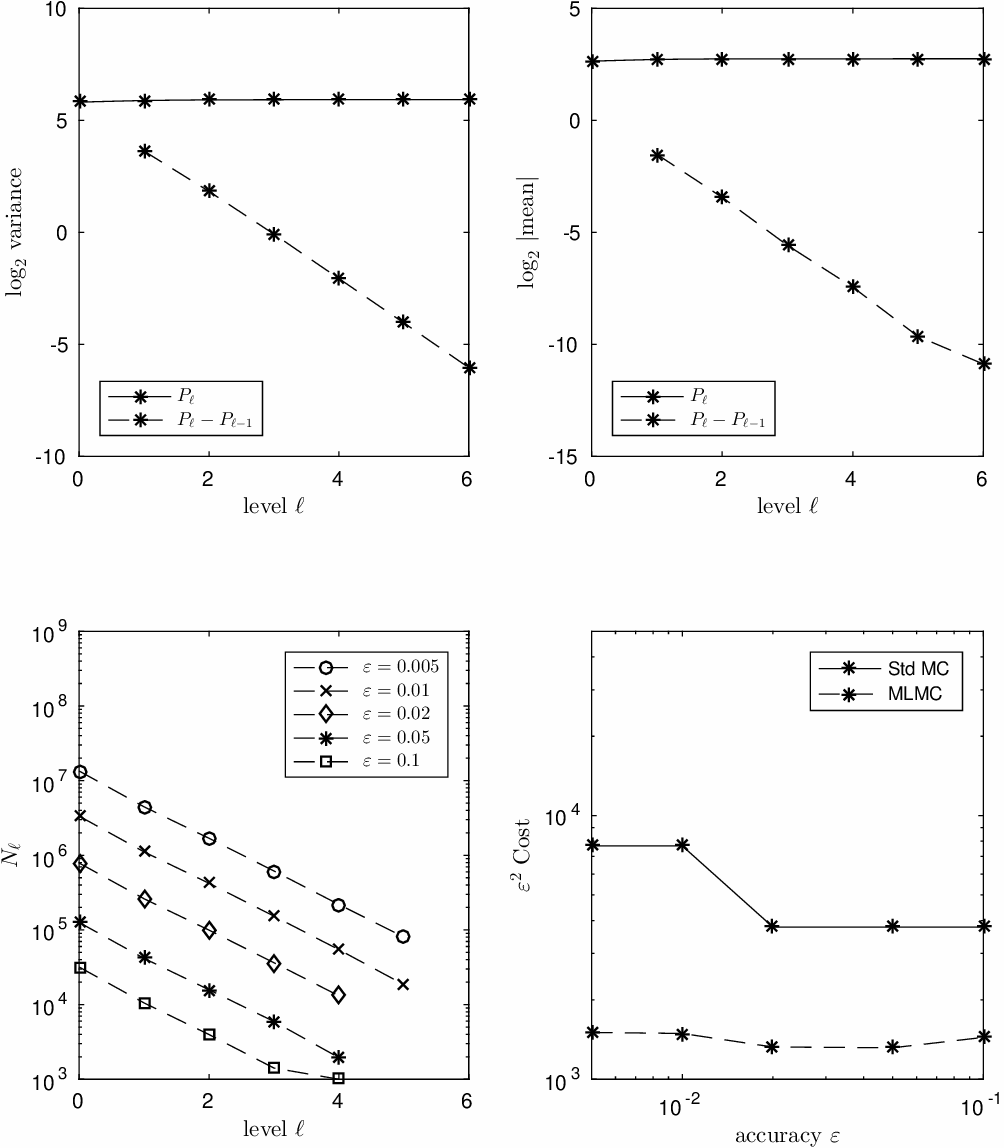}
\end{center}
\par
\vspace{-.1in}
\caption{Asian option in spectrally negative $\protect\alpha $-stable model}
\label{alpha_a}
\end{figure}

\subsection{Asian option}

The Asian option we consider is an arithmetic Asian call option with 
discounted payoff
\begin{equation*}
P=\exp (-rT)\ \max \left( 0,\!\ \overline{S}\!-K\right),
\end{equation*}%
where $T\!=\!1$, $r\!=\!0.05$, $S_{0}\!=\!100$, $K\!=\!100$ and 
\begin{equation*}
\overline{S}=S_{0}\ T^{-1}\int_{0}^{T}\exp \left( X_{t}\right) \D t\ .
\end{equation*}
For a general L\'{e}vy process it is not easy to directly sample the
integral process. We use the trapezoidal approximation%
\begin{equation*}
\overline{\hS}:=S_{0}\ T^{-1} \sum_{j=0}^{n-1}{\textstyle\frac{1}{2}}\,h\,\,(\exp
\left( X_{jh}\right) \!+\!\exp \left( X_{\left( j+1\right) h}\right) ),
\end{equation*}%
where $n=\!T/h$ is the number of timesteps. The payoff approximation
is then
\begin{equation*}
\hP=\exp (-rT)\ \max( 0,\!\ \overline{\hS}\!-K\ ) .
\end{equation*}
In the multilevel estimator, the approximation $\hP_{\ell}$ on level 
$\ell$ is obtained using $n_\ell\!:=\!2^{\ell}$ timesteps. 

Figures \ref{vg_a}, \ref{nig_a}, \ref{alpha_a} are for the 
VG, NIG and $\alpha$-stable models respectively. 
The numerical results in the top right plots indicate approximately
second order weak convergence.  With the standard Monte Carlo method, 
the top left plots show that the variance is approximately independent,
and therefore, the standard Monte Carlo calculation has computational 
cost $\bO(\eps^{-2} n_{\ell}) = \bO(\eps^{-2.5})$.  Multiplying this cost 
by $\eps^{2}$ to create the bottom right complexity plots, the scaled 
cost is $\bO(n_{\ell})$ and therefore goes up in steps as $\eps$ is reduced,
when decreasing $\eps$ requires an increase in the value of the finest level $L$.  
On the other hand, the convergence rate of the variance of the MLMC estimator
is approximately $1.2$ for VG, $2.0$ for NIG and $2$ for the $\alpha$-stable model.
Since in all three cases we have $\beta\!>\!1$, the MLMC theorem gives a
complexity which is $\bO(\eps^{-2})$ which is consistent with the results 
in the bottom right plots which show little variation in $\eps^2\, C$ for 
the MLMC estimator.

For this Asian option, MLMC is 3-8 times more efficient than standard MC.  
The gains are modest because the high rate of weak convergence means that 
only 4 levels of refinement are required in most cases, so there is only 
a $2^4\!=\!16$ difference in cost between each MC path calculation on the 
finest level, and each of the MLMC path calculations on the coarsest level.


\begin{figure}[t!]
\begin{center}
\includegraphics[width=0.9\textwidth]{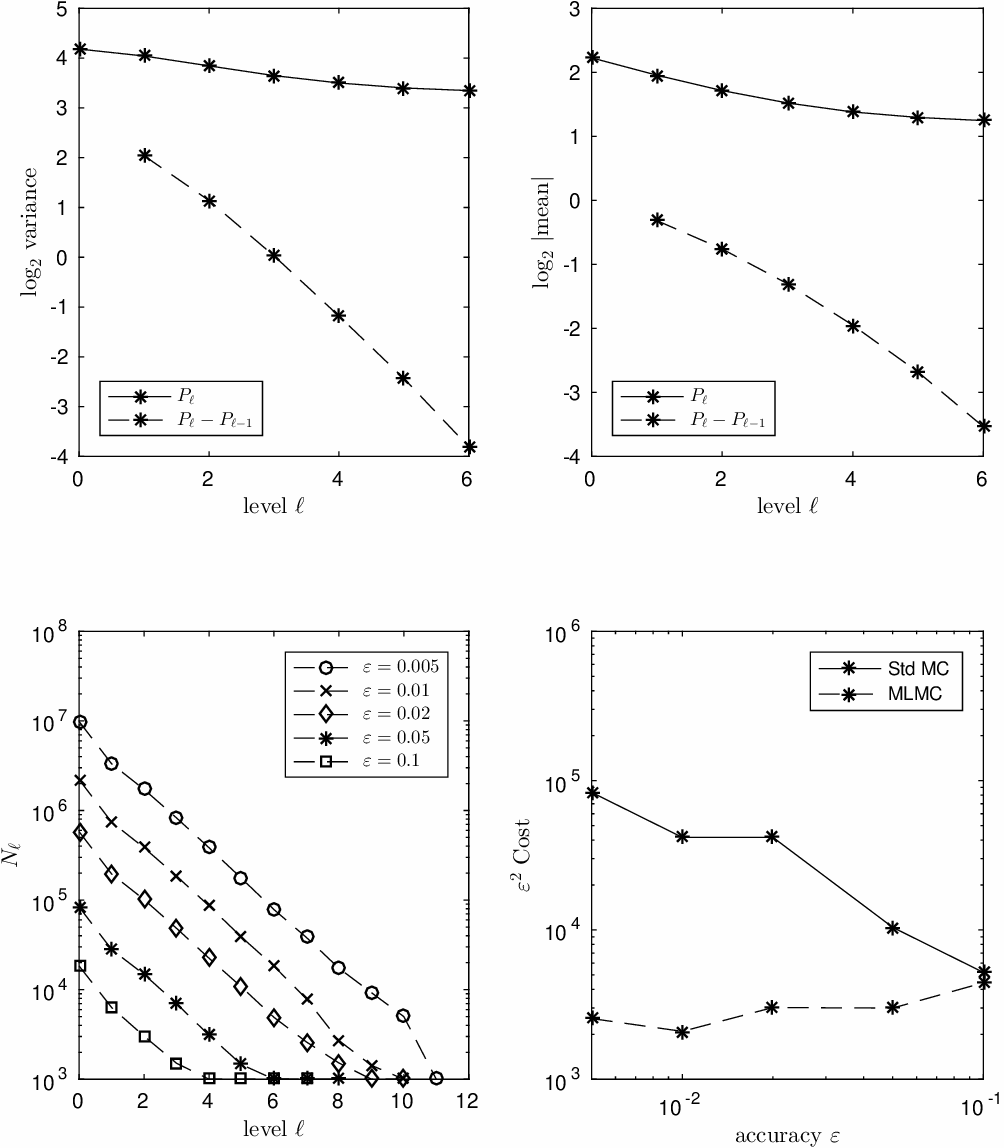} 
\end{center}
\par
\vspace*{-0.25in}
\caption{Lookback option with Variance Gamma model}
\label{vg_l}
\end{figure}

\begin{figure}[t!]
\begin{center}
\includegraphics[width=0.9\textwidth]{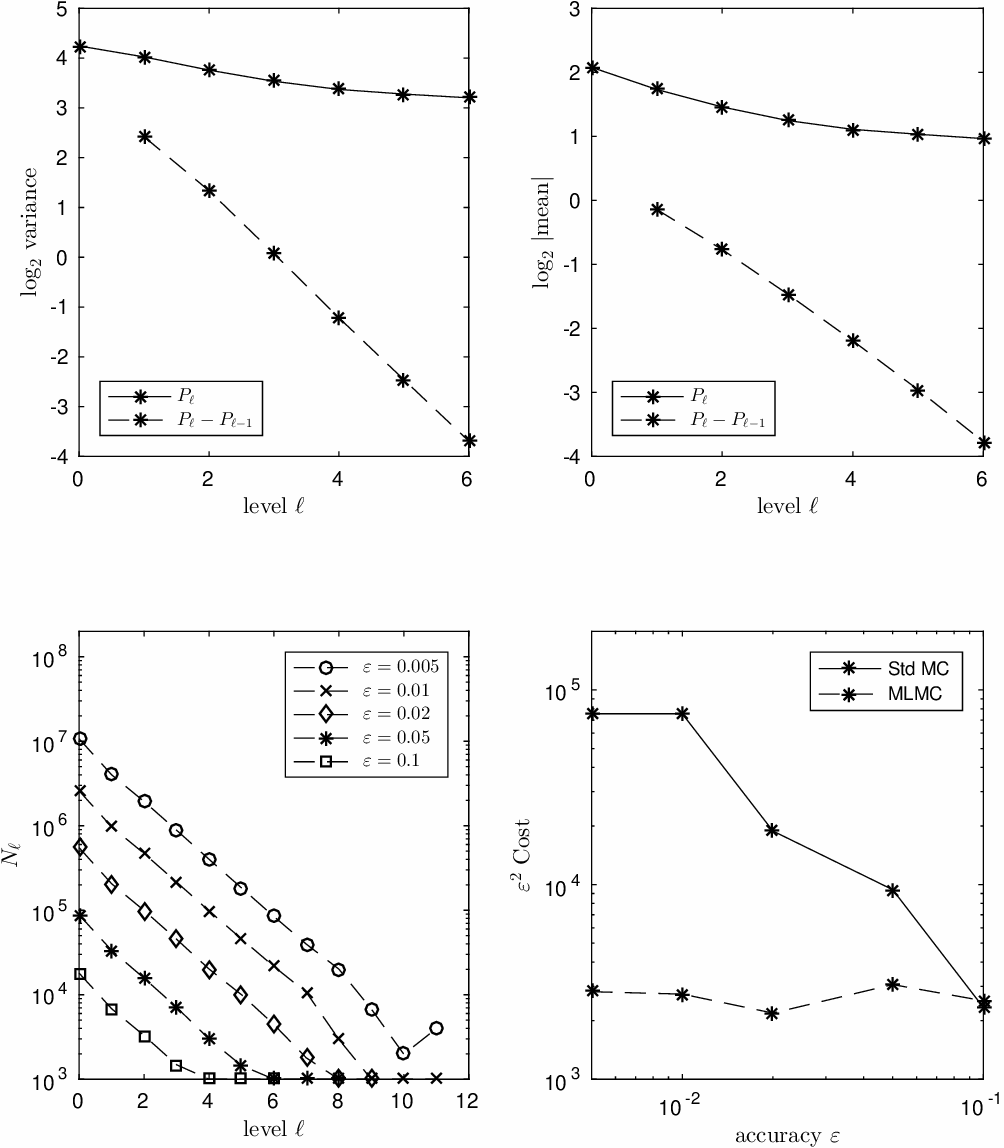}
\end{center}
\par
\vspace*{-0.25in}
\caption{Lookback option with Normal Inverse Gaussian model}
\label{nig_l}
\end{figure}

\begin{figure}[t!]
\begin{center}
\includegraphics[width=0.9\textwidth]{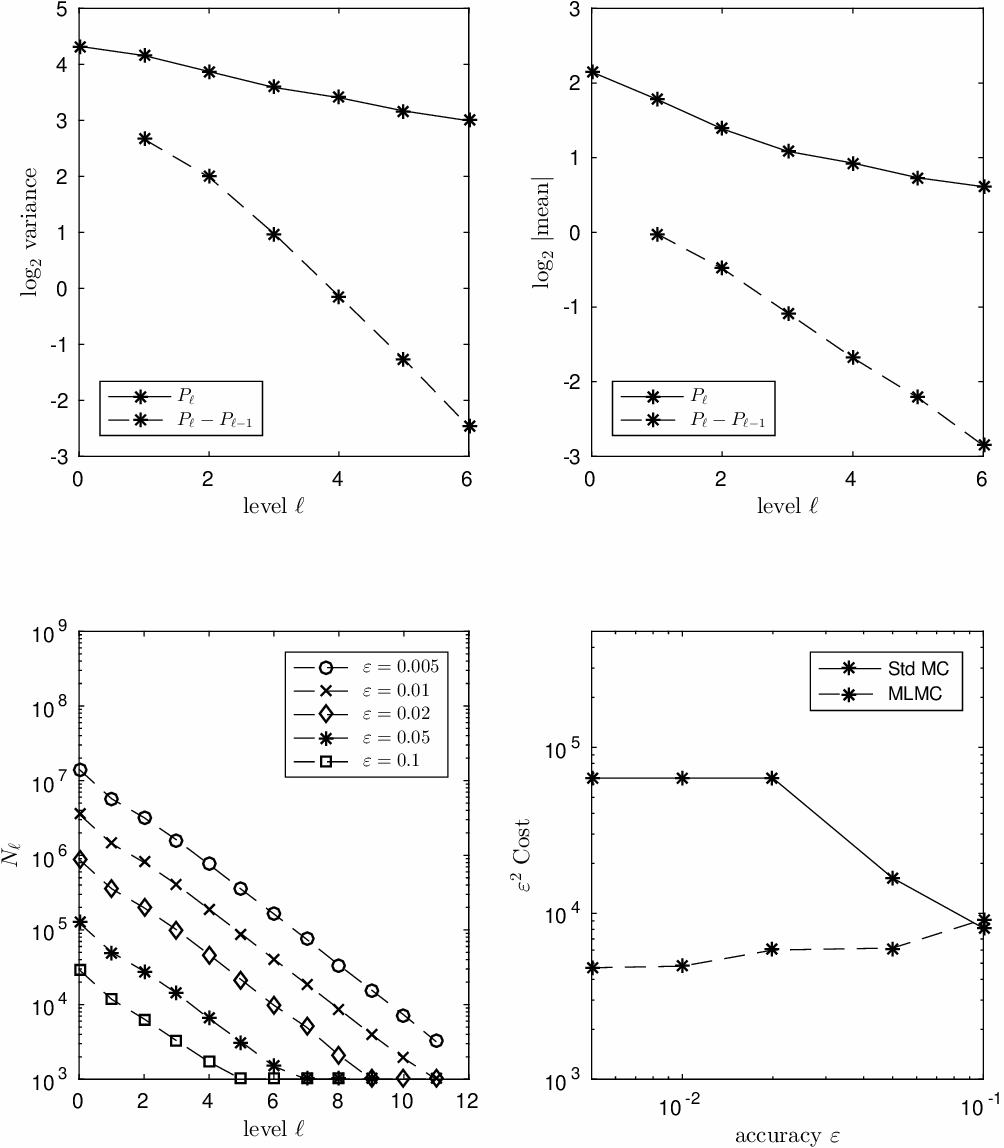}
\end{center}
\par
\vspace*{-0.25in}
\caption{Lookback option with spectrally negative $\protect\alpha $-stable
model}
\label{alpha_l}
\end{figure}

\subsection{\protect\bigskip Lookback option}

The lookback option we consider is a put option on the floating underlying,
\begin{equation*}
P\ =\ \exp (-rT)( K-\sup_{0\leq t\leq T}S_t) ^{+} \! =\ \exp
(-rT)\left( K-S_0\, \exp(m)\, \right) ^{+},
\end{equation*}
where $m=\sup_{0\leq t\leq T}X_t$, 
with $T\!=\!1$, $r\!=\!0.05$, $S_{0}\!=\!100$, $K\!=\!110$. 
We use the discretely monitored maximum as the approximation, so that
\begin{equation*}
\hP_\ell = \exp (-rT)\left( K-S_{0}\, \exp(\widehat{m}_\ell)\, \right) ^{+},
~~~~~ \widehat{m}_\ell=\max_{0\leq j \leq n_\ell} X_{j h_\ell}.
\end{equation*}

Figures \ref{vg_l}, \ref{nig_l}, \ref{alpha_l} show the numerical results for
the VG, NIG and $\alpha$-stable models.  The most obvious difference compared 
to the Asian option is a greatly reduced order of weak convergence, approximately
$1$, $0.8$ and $0.6$ in the respective cases.  This reduced weak convergence leads
to a big increase in the finest approximation level, which in turn greatly 
increases the standard MC cost but doesn't significantly change the MLMC cost.
Hence, the computational savings are much greater than for the Asian option,
with savings of up to a factor of 30.

The small erratic fluctuation in $N_{\ell }$ on levels greater than $5$ is due to 
poor estimates of the variance due to a limited number of samples.  This also 
appears later for the barrier option.

\begin{figure}[t!]
\begin{center}
\includegraphics[width=0.9\textwidth]{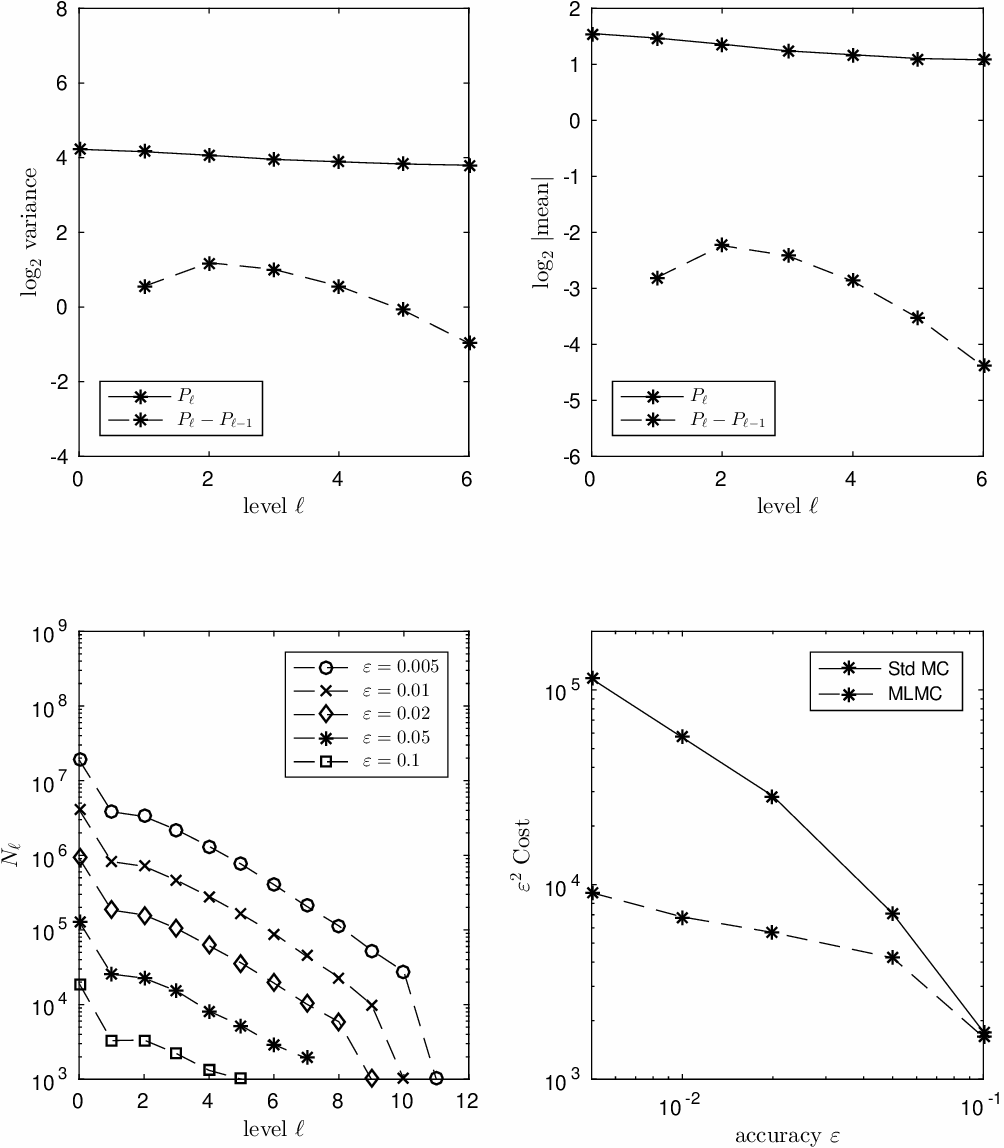}
\end{center}
\par
\vspace*{-0.25in}
\caption{Barrier option in variance gamma model}
\label{vg_b}
\end{figure}

\begin{figure}[t!]
\begin{center}
\includegraphics[width=0.9\textwidth]{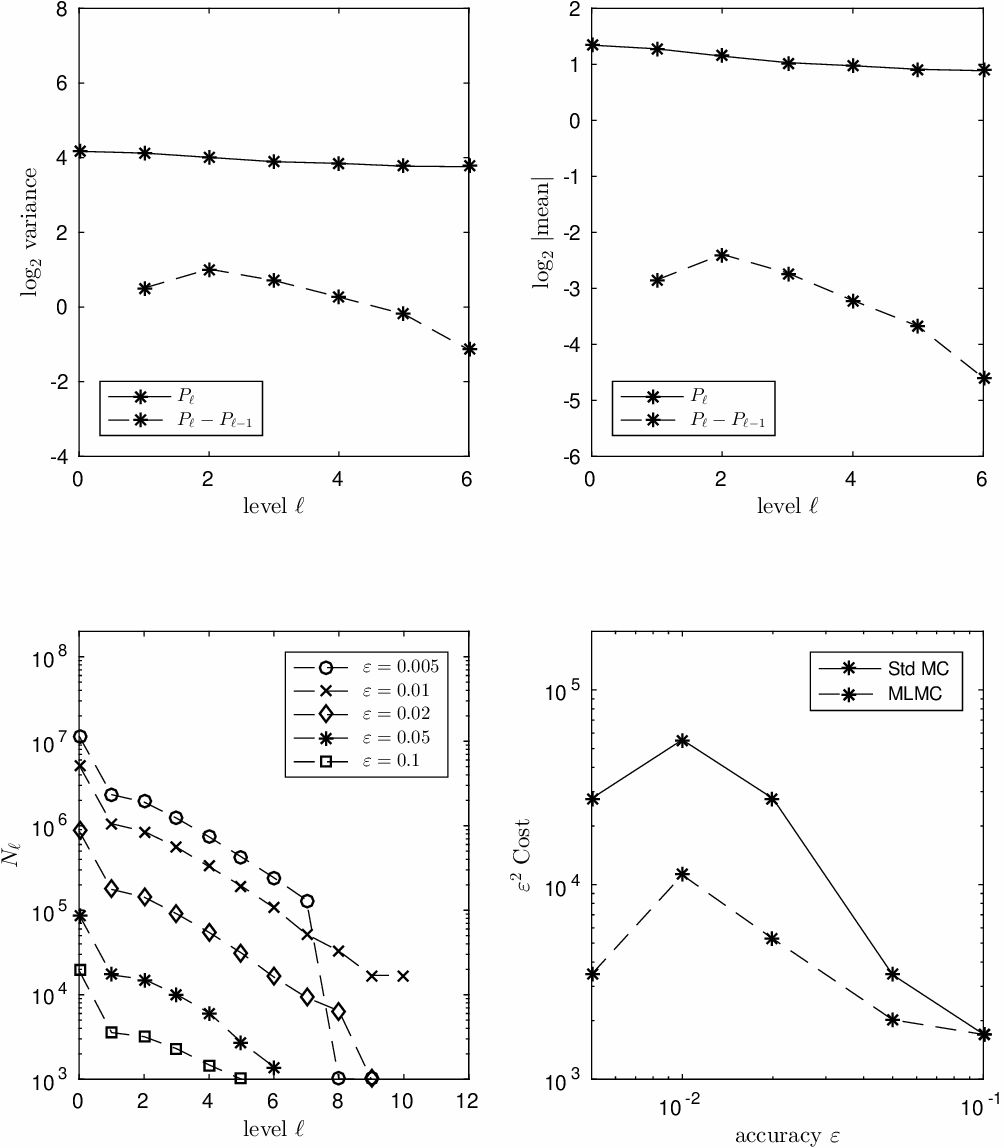}
\end{center}
\par
\vspace*{-0.25in}
\caption{Barrier option in Normal Inverse Gaussian model}
\label{nig_b}
\end{figure}

\subsection{Barrier option}

The barrier option is an up-and-out call with payoff
\begin{equation*}
P\ =\ \exp (-rT)\,(S_{T}\!-\!K)^{+}\,\ii_{\left\{ \sup_{0\leq t\leq T}S(t)<B\right\} }
 \ =\ \exp (-rT)\,(S_{T}\!-\!K)^{+}\,\ii_{\left\{ m < \log(B/S_0)\right\} },
\end{equation*}%
with $T\!=\!1$, $r\!=\!0.05$, $S_0\!=\!100$, $K\!=\!100$, $B\!=\!115$.
The discretely monitored approximation is
\begin{equation*}
\hP_\ell = \exp (-rT)\,(S_{T}\!-\!K)^{+}\,\ii_{\left\{ \hat{m_\ell}%
<\log(B/S_0)\right\} }, 
~~~~~ \widehat{m}_\ell=\max_{0\leq j \leq n_\ell} X_{j h_\ell}
\end{equation*}


With the barrier option, the most noticeable change from the previous options 
is a reduction in the rate of convergence $\beta$ of the MLMC variance, with 
$\beta \approx 0.75, 0.5, 0.6$ in the three cases.  For $\beta\!<\!1$, the MLMC 
theorem proves a complexity which is $\bO(\eps^{-2 - (1\!-\!\beta)/\alpha})$, with 
$\alpha$ here being the rate of weak convergence.  The fact that the MLMC 
complexity is not $\bO(\eps^{-2})$ is clearly visible from the bottom right
complexity plots, but there are still significant savings compared to the standard 
MC computations.


\begin{figure}[t]
\begin{center}
\includegraphics[width=0.9\textwidth]{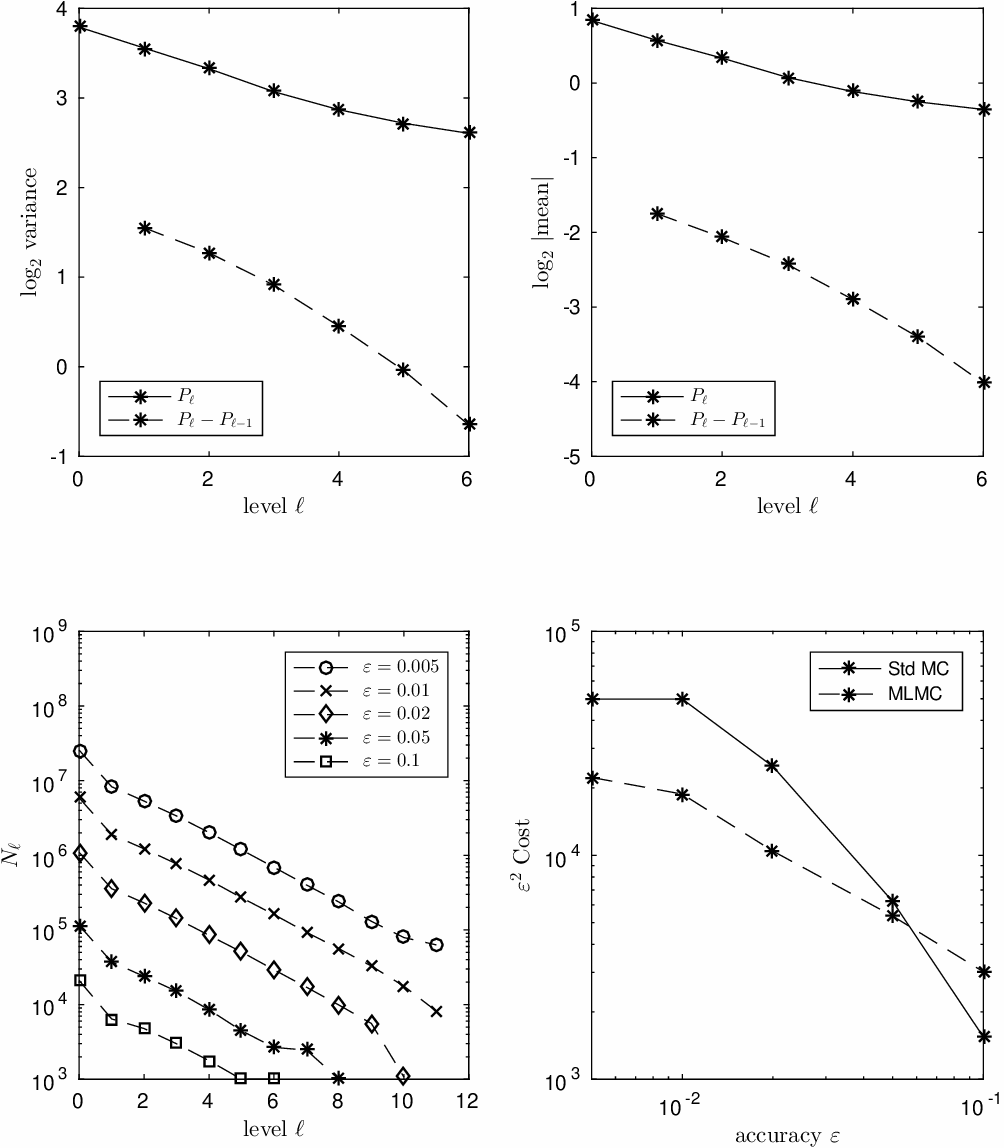}
\end{center}
\par
\vspace*{-0.25in}
\caption{Barrier option in spectrally negative $\protect\alpha $-stable
model }
\label{alpha_b}
\end{figure}

\newpage

\subsection{Summary and discussion}


Table \ref{table:levy_rate} summarizes the convergence 
rates for the weak error $\EE[\hP_{\ell}\!-\!P]$ and 
the MLMC variance $V_{\ell}=\VV[\hP_{\ell}\!-\!\hP_{\ell-1}]$ 
given by Propositions \ref{prop:arith_asian}, \ref{prop:lookback}, 
\ref{prop:barrier}, and the empirical convergence rates
observed in the numerical experiments.

In general, the agreement between the analysis and the numerical rates 
of convergence is quite good, suggesting that in most cases the analysis 
may be sharp.
The most obvious gap between the two is with the weak order of 
convergence for the Asian option with all three models; the analysis
proves an $\bO(h)$ bound, whereas the numerical results suggest it 
is actually $\bO(h^2)$. The numerical results are perhaps not surprising 
as $\bO(h^2)$ is the order of convergence of trapezoidal integration 
of a smooth function, and therefore it is the order one would expect 
if the payoff was simply a multiple of $\overline{S}$.

\begin{table}[tbp]
\caption{Convergence rates of weak error and variance $V_{\ell}$ for 
VG, NIG and $\protect\alpha $-stable processes; $\protect\delta $ 
can be any small positive constant. The numerical values are estimates
based on the numerical experiments.}
\label{table:levy_rate}
\par
\begin{center}
\begin{tabular}{|l|l|l|l|l|}
\hline
& \multicolumn{4}{c|}{VG} \\
& \multicolumn{2}{|c}{numerical} & \multicolumn{2}{|c|}{analysis} \\
option & weak & var & weak & var \\ \hline
Asian & $\bO\left( h^2\right) $ & $\bO\left( h^{2}\right) $ & $\bO\left(
h\right) $ & $\bO\left( h^{2}\right) $ \\
lookback & $\bO\left( h\right) $ & $\bO\left( h^{1.2}\right) $ & $\bO\left(
h\left\vert \log h\right\vert \right) $ & $\bO\left( h\right) $ \\
barrier & $\bO\left( h^{0.8}\right) $ & $\bO\left( h^{0.9}\right) $ & $\bo\left(
h^{1-\delta }\right) $ & $\bo\left( h^{1-\delta }\right) $ \\ \hline
\end{tabular}%
\par
\vspace{0.1in}
\begin{tabular}{|l|l|l|l|l|}
\hline
& \multicolumn{4}{c|}{NIG} \\
& \multicolumn{2}{|c|}{numerical} & \multicolumn{2}{|c|}{analysis} \\
option & weak & var & weak & var \\ \hline
Asian & $\bO\left( h^2\right) $ & $\bO\left( h^{2}\right) $ & $\bO\left(
h\right) $ & $\bO\left( h^{2}\right) $ \\
lookback & $\bO\left( h^{0.8}\right) $ & $\bO\left( h^{1.2}\right) $ & $\bo%
\left( h^{1-\delta }\right) $ & $\bO\left( h\left\vert \log h\right\vert
\right) $ \\
barrier & $\bO\left( h^{0.4}\right) $ & $\bO\left( h^{0.5}\right) $ & $\bo%
\left( h^{0.5-\delta }\right) $ & $\bo\left( h^{0.5-\delta }\right) $ \\
\hline
\end{tabular}%
\par
\vspace{0.1in}
\begin{tabular}{|l|l|l|l|l|}
\hline
& \multicolumn{4}{c|}{spectrally negative $\alpha $-stable with $\alpha >1$}
\\
& \multicolumn{2}{|c}{numerical for $\alpha =1.5597$} & \multicolumn{2}{|c|}{
analysis} \\
option & weak & var & weak & var \\ \hline
Asian & $\bO\left( h^2\right) $ & $\bO\left( h^{2}\right) $ & $\bO\left(
h\right) $ & $\bO\left( h^{2}\right) $ \\
lookback & $\bO\left( h^{0.6}\right) $ & $\bO\left( h^{1.6}\right) $ & $\bo%
\left( h^{1/\alpha -\delta }\right) $ & $\bo\left( h^{2/\alpha -\delta
}\right) $ \\
barrier & $\bO\left( h^{0.5}\right) $ & $\bO\left( h^{0.6}\right) $ & $\bo%
\left( h^{1/\alpha -\delta }\right) $ & $\bo\left( h^{1/\alpha -\delta
}\right) $ \\ \hline
\end{tabular}%
\end{center}
\end{table}

\if 0

For the arithmetic Asian payoff, the bottom right plots of Figures 
\ref{vg_a},\ref{nig_a},\ref{alpha_a} demonstrate that to achieve a root-mean-square
(RMS) error of $\bO(\eps),$ numerically it needs $\bO(\eps^{-2})$
computational cost for VG, NIG and spectrally negative $\alpha $-stable
processes. For the analytical justification, combining weak convergence
results with the upper bound on the multilevel variance convergence 
rate $\beta$, the $\beta>1$ case of 
Theorem \ref{thm:cc} indicates the computational complexity of $\bO(\eps%
^{-2})$ is required for VG, NIG and spectrally negative $\alpha $-stable
processes.

For the lookback put payoff,  the bottom right plots of Figures 
\ref{vg_l},\ref{nig_l},\ref{alpha_l} demonstrate that to achieve an 
RMS error of $\bO(\eps)$ requires
an $\bO(\eps^{-2})$ computational cost for VG, NIG and spectrally negative 
$\alpha $-stable processes. For the analytical justification, section 
\ref{sec:lookback} obtains weak convergence results for VG, NIG and 
$\alpha $-stable processes. Combined with the upper bound on the
multilevel variance $V_{\ell }$, Theorem \ref{thm:cc} indicates a
computational complexity of $\bO(\eps^{-2}\left( \log \eps\right) ^{2})$ 
is required for VG where the multilevel variance convergence rate is 
$\beta=1$. A computational complexity of $\bo(\eps^{-2+\delta })$ is guaranteed for NIG. 
Since $V_{\ell} = \bO(\eps^{-2}\left( \log \eps\right) ^{2})$ implies $\beta=1-\delta_1$, together with $\alpha=1-\delta_2$, computational complexity has a bound $c_4 \eps^{-2-(1-\beta )/\alpha}$.  
By choosing small $\delta_1, \delta_2$, it is bounded by $\bo(\eps^{-2+\delta})$ where $\delta$ is any small positive number.
A computational complexity of $\bO(\eps^{-2})$ for spectrally negative $\alpha$-stable processes with $1<\alpha <2$ is also guaranteed.

This is an example where multilevel can achieve the optimal computational
complexity up to a $\bo(\eps^{\delta })$ factor for the payoff Lipschitz
continuous w.r.t. the supremum norm, regardless of the Blumenthal-Getoor
index of the process.

For the up-and-out barrier payoff, the bottom right plots of Figures \ref{vg_b},\ref{nig_b},\ref{alpha_b} demonstrate that to achieve an RMS error of $%
\bO(\eps),$ it needs at least $\bO(\eps^{-2})$ computational cost for VG,
NIG and spectrally negative $\alpha $-stable processes. In the analysis,
combining weak convergence results with the upper bound on the multilevel
variance convergence rate, Theorem \ref{thm:cc} indicates upper bounds on
computational complexity. For VG processes, since $\beta=1-\delta_1$ and $\alpha=1-\delta_2$,  by the same argument as before, computational complexity is $\bo(\eps^{-2+\delta })$ where $\delta $ is any small
positive number. Similarly, for
spectrally negative $\alpha $-stable processes, computational complexity is $\bo(\eps^{-2+\delta })$.  For NIG processes, from Table \ref{table:levy_rate}, $\alpha=\beta =0.5-\delta$ in Theorem \ref{thm:cc}, and
upper bounds on
computational complexity is of order $\eps^{-2-(1-\beta )/\alpha}= \eps^{-3-\delta_0 } $ where $\delta_0$ is arbitrary small positive number.
 Currently there is a gap between numerical results and the
upper bound by the analysis.

The barrier payoff in a model driven by the spectrally negative $\alpha $%
-stable process is an example where multilevel can achieve the optimal
computational complexity up to a $\bo(\eps^{\delta})$ factor for the
non-Lipschitz path-dependent payoff regardless of the Blumenthal-Getoor
index of the process. 

\fi

\section{Proofs}
\label{sec:proofs}

\subsection{Proof of Proposition \protect\ref{prop:arith_asian}}
\label{sec:proofs-5.1}

\begin{proof}
We decompose the difference between the true value and approximation 
into parts which we can bound separately:
\begin{eqnarray*}
\left\vert \overline{S}-\overline{\hS}\right\vert &=& S_0 \ T^{-1}\left\vert
\int_{0}^{T}\exp \left(X_{t}\right) \D t\ - \ 
\sum_{j=0}^{n-1}{\textstyle\frac{1}{2}}\,h\,\,(\exp \left( X_{jh}\right)
\!+\!\exp \left( X_{\left( j+1\right) h}\right) )
\right\vert \\
&=& S_0 \ T^{-1}\Bigg\vert\sum_{j=0}^{n-1}{\textstyle}\exp \left( X_{jh}\right)
\int_{jh}^{\left( j+1\right) h}\left( \exp \left( X_{t}-X_{jh}\right)
-1\right) \D t-{\fracs{1}{2}h}\exp \left( X_{T}\right) +{\fracs{1}{2}}h%
\Bigg\vert.
\end{eqnarray*}%
If we define
\begin{eqnarray*}
{b}_{j} &=&\exp \left( X_{jh}\right), \\
I_{j} &=&\int_{jh}^{(j+1) h}\left( \exp \left(
X_{t}\!-\!X_{jh}\right) -1\right)\ \D t, \\
R_{A} &=&-{\fracs{1}{2}h}\exp \left( X_{T}\right) +{\fracs{1}{2}}h,
\end{eqnarray*}
then
\[
\mathbb{E}\left[ \left( \overline{\hS}-\overline{S}\right) ^{2}\right]
\ =\ 
T^{-2}S_{0}^{2}\ \mathbb{E}\left[ \left\vert 
\sum_{j=0}^{n-1}{b}_{j}I_{j}+R_{A}\right\vert ^{2}\right]
\ \leq\ 
2T^{-2}S_{0}^{2}\left( \mathbb{E}\left[ \left\vert 
\sum_{j=0}^{n-1}{b}_{j}I_{j}\right\vert ^{2}\right] +\mathbb{E}\left[ R_{A}^{2}\right] \right) .
\]
We have $\mathbb{E}\left[ R_{A}^{2}\right] =\bO\left( h^{2}\right)$, and
due to the independence of ${b}_{j}$ and $I_{j}$ we obtain
\begin{eqnarray}
\mathbb{E}\left[ \left\vert \sum_{j=0}^{n-1}{b}_{j}I_{j}\right\vert ^{2}%
\right] &=&\mathbb{E}\left[ \sum_{j=0}^{n-1}{b}_{j}^{2}I_{j}^{2}+2%
\sum_{m=1}^{n-1}\sum_{j=0}^{m-1}{b}_{m}I_{m}{b}_{j}I_{j}\right]  \notag \\
&=&\sum_{j=0}^{n-1}\mathbb{E}\left[ {b}_{j}^{2}\right] \mathbb{E}\left[
I_{j}^{2}\right] +2 \sum_{m=1}^{n-1}\sum_{j=0}^{m-1}
\mathbb{E}\left[{b}_{m}I_{m}{b}_{j}I_{j}\right].  \label{arith_asian}
\end{eqnarray}%
Defining
$A=2m+\int \left(
e^{2z}-1-2z1_{\left\vert z\right\vert <1}\right) \nu \left( \D z\right)$,
we have $\mathbb{E}\left[ {b}_{j}^{2}\right]=e^{Ajh}$.
Furthermore, by the Cauchy-Schwarz inequality,%
\begin{eqnarray*}
\mathbb{E}\left[ I_{j}^{2}\right]
&\leq & h \ \mathbb{E}\left[  \int_{jh}^{\left( j+1\right) h}
\left( \exp \left(
X_{t}-X_{jh}\right) -1\right) ^{2}\D t\right] \\
&=& h \int_0^h \mathbb{E}\left[\left( \exp \left( X_{t}\right) -1\right)
^{2} \right]\, \D t\\
&=& h \left( \frac{1}{A}\left( e^{Ah}-1-Ah\right) - 
2\frac{1}{r}\left(e^{rh}-1-rh\right) \right)
\end{eqnarray*}%

Note that $1+x < e^x < 1+x+x^2$ for $0\!<\!x\!<\!1$, and therefore
for $h\!<\!1/A$ we have $\mathbb{E}\left[ I_{j}^{2}\right]<Ah^3$ and hence
\[
\sum_{j=0}^{n-1}\mathbb{E}\left[ {b}_{j}^{2}\right] \mathbb{E}\left[I_{j}^{2}\right] 
\ <\  A\,  h^{3}\, \sum_{j=0}^{n-1}e^{Ajh} 
\ =\ A \, \frac{e^{AT}-1}{e^{Ah}-1}\ h^3
\ <\  (e^{AT}\!-\!1)\ h^2.
\]

Now we calculate the second term in (\ref{arith_asian}). Note that
for $m\!>\!j,$ $I_{m}$ is independent of ${b}_{m}{b}_{j}I_{j}$, and 
${b}_{m}/b_{j+1}$ is independent of $b_{j+1}{b}_{j}I_{j},$ so%
\begin{equation*}
\sum_{m=1}^{n-1}\sum_{j=0}^{m-1}
\mathbb{E}\left[ {b}_{m}I_{m}{b}_{j}I_{j}%
\right] =\sum_{m=1}^{n-1}\mathbb{E}\left[ I_{m}\right] \sum_{j=0}^{m-1}%
\mathbb{E}\left[ {b}_{m}/b_{j+1}\right] \mathbb{E}\left[ b_{j+1}{b}_{j}I_{j}%
\right] .
\end{equation*}
Firstly, for $h<1/r$,
\[
\mathbb{E}\left[ I_{m}\right]
\ =\ \int_{0}^{h}(e^{rt} - 1)\ \D t
\ =\ r^{-1}\left( e^{rh}-1 - rh\right)
\ < \ r\, h^2.
\]
Moreover,  we have
$\mathbb{E}\left[ {b}_{m}/b_{j+1}\right] =e^{r\left( m-j-1\right) h}$
and
\begin{eqnarray*}
\mathbb{E}\left[ b_{j+1}{b}_{j}I_{j}\right]  
&\!=\!&\mathbb{E}\left[ \exp \left(
2X_{jh}\right) \exp \left( X_{\left( j+1\right) h}\!-\!X_{jh}\right)
\int_{jh}^{\left( j+1\right) h} \Big(\exp \left( X_{t}\!-\!X_{jh}\right)
-1\Big)\  \D t\right]  \\
&\!=\!&\mathbb{E}\left[ \exp \left( 2X_{jh}\right) \right] 
\ \mathbb{E}\left[ \exp \left( X_{h}\right) \int_{0}^{h}\Big(
\exp \left( X_{t}\right) -1\Big)\ \D t\right]  \\
&\!=\!&e^{Ajh}\int_{0}^{h}\Big(
\mathbb{E}\left[ \exp \left( X_{h}\!-\!X_{t}\right) \right]
\mathbb{E}\left[ \exp \left( 2X_{t}\right) \right] 
-\mathbb{E}\left[\exp\left( X_{h}\right)\right]\Big) \ \D t \\
&\!=\!&e^{Ajh}\int_{0}^{h} \left(e^{r(h-t)} e^{At} - e^{rh} \right) \ \D t \\
&\!=\!&e^{Ajh}e^{rh}\ 
\frac{e^{(A-r)h}-1-(A\!-\!r)h}{A\!-\!r}.
\end{eqnarray*}%
Thus, for $h<1/(A\!-\!r)$,
\begin{eqnarray*}
\sum_{m=1}^{n-1}\sum_{j=0}^{m-1}\mathbb{E}\left[ {b}_{m}/b_{j+1}\right]
\mathbb{E}\left[ b_{j+1}{b}_{j}I_{j}\right]
&=& \frac{e^{(A-r)h}-1-(A\!-\!r)}{A\!-\!r}\ 
\sum_{m=1}^{n-1}\sum_{j=0}^{m-1}e^{r(m-j)h}e^{Ajh} \\
&=& \frac{e^{(A-r)h}-1-(A\!-\!r)h}{(A\!-\!r)\, (e^{(A-r) h}-1)}\ 
\sum_{m=1}^{n-1} (e^{Amh}-e^{rmh})  \\
&<& h \, \frac{e^{AT}-1}{e^{Ah}-1} \\[0.05in]
&<& A^{-1} (e^{AT}-1).
\end{eqnarray*}%
Hence,
\[
\mathbb{E}\left[ \sum_{m=1}^{n-1}\sum_{j=0}^{m-1}b_m I_m b_j I_j \right] 
= \sum_{m=1}^{n-1}\mathbb{E} \left[ I_m\right] \sum_{j=0}^{m-1}
\mathbb{E}\left[ b_m/b_{j+1}\right] 
\mathbb{E}\left[ b_{j+1}b_j I_j \right]
=\bO(h^{2}),
\]
and we can therefore conclude that 
$\displaystyle
\mathbb{E}\left[ \left( \overline{\hS}-\overline{S}\right) ^{2}\right] =\bO%
(h^{2}).
$
\end{proof}

\subsection{ L\'{e}vy process decomposition}

\label{sec:decomposition}

The proofs rely on a decomposition of the L\'{e}vy process into a
combination of a finite-activity pure jump part, a drift part, and a
residual part consisting of very small jumps.

Let $X$ be an $(m,0,\nu )$-L\'{e}vy process:%
\begin{equation}
X_{t}=mt+\int_{0}^{t}\int_{\{ |z| \geq 1\}}z\ J(\D z,\D %
s)+\int_{0}^{t}\int_{\{ |z| <1\}}z\left( J(\D z,\D s)-\nu (\D %
z)\D s\right) .  \label{pure-jump-decomposition}
\end{equation}%
The finite activity jump part is defined by
\begin{equation*}
X_{t}^{\varepsilon }=\int_{0}^{t}\int_{\{\varepsilon <\left\vert z\right\vert
\}}z\ J(\D z,\D s)=\sum_{i=1}^{N_{t}}Y_{i}
\end{equation*}%
to be the compound Poisson process truncating the jumps of $X$ smaller
than $\varepsilon $ which is assumed to satisfy $0\!<\!\varepsilon \!<\!1$.
The intensity of $N_{t}$ and the c.d.f. of $Y_{i}$ are
\begin{eqnarray}
\lambda _{\varepsilon } &=&\int_{\{\varepsilon <\left\vert z\right\vert \}}\nu (%
\D z).  \label{def:lam} \\
\mathbb{P}\left[ Y_{i}<y\right] &=&\lambda _{\varepsilon
}^{-1}\int_{\{z<y\}}1_{\left\{ \varepsilon <\left\vert z\right\vert \right\}
}\nu (\D z);  \notag
\end{eqnarray}%
The drift rate for the drift term is defined to be
\begin{equation}
\mu _{\varepsilon }=m-\int_{\{\varepsilon <\left\vert z\right\vert <1\}}z\ \nu (%
\D z),  \label{def:mu}
\end{equation}%
so that the residual term is then a martingale:%
\begin{equation}
R_{t}^{\varepsilon }:=\int_{0}^{t}\int_{\{ |z| \leq
\varepsilon \}}z\left( J(\D z,\D s)-\nu (\D z)\D s\right) .  \label{def:R}
\end{equation}%
We define
\begin{equation}
\sigma _{\varepsilon }^{2}=\int_{\{ |z| \leq \varepsilon
\}}z^{2}\nu (\D z),  \label{sig_ep}
\end{equation}%
so that $\VV\left[ R_{t}^{\varepsilon }\right] =\sigma _{\varepsilon }^{2}t$.

These three quantities, $\mu _{\varepsilon }$, $\lambda _{\varepsilon }$ and
$\sigma _{\varepsilon }$ will all play a major role in the subsequent
numerical analysis.

We bound $D_{n}$ by the difference between continuous maxima and 2-point
maxima over all timesteps:
\begin{equation}
D_{n}\ =\ \sup_{0\leq t\leq 1}X_{t}-\max_{i=0,1,\ldots ,n}X_{\frac{i}{n}} \
\leq \ \max_{i=0,\ldots ,n-1} D_{n}^{(i)}  \label{estimate:2-point}
\end{equation}
where the random variables
\begin{equation*}
D_{n}^{(i)} = \sup_{[\frac{i}{n},\frac{i+1}{n}]}X_{t} - \max \left( X_{\frac{%
i+1}{n}},X_{\frac{i}{n}}\right)
\end{equation*}
are independent and identically distributed. If we now define
\begin{equation*}
\Delta^{(i)} X_t = X_{\frac{i}{n}+t} - X_{\frac{i}{n}}, ~~~~ \Delta^{(i)}
X^\varepsilon_t = X^\varepsilon_{\frac{i}{n}+t} - X^\varepsilon_{\frac{i}{n}%
}, ~~~~ \Delta^{(i)} t = t - \frac{i}{n}, ~~~~ \Delta^{(i)} R^\varepsilon_t
= R^\varepsilon_{\frac{i}{n}+t} - R^\varepsilon_{\frac{i}{n}}, ~~~~
\end{equation*}
then
\begin{eqnarray}
D_{n}^{(i)} &=&\sup_{[0,\frac{1}{n}]}\Delta^{(i)} X_{t}-\left( \Delta^{(i)}
X_{\frac{1}{n}}\right) ^{+}  \notag \\
&=&\sup_{[0,\frac{1}{n}]}\left( \Delta^{(i)} X_{t}^{\varepsilon
}+\Delta^{(i)} R_{t}^{\varepsilon }+\mu _{\varepsilon }\Delta^{(i)}t\right)
-\left( \Delta^{(i)} X_{\frac{1}{n}}^{\varepsilon } +\Delta^{(i)} R_{\frac{1%
}{n}}^{\varepsilon }+\mu _{\varepsilon }\frac{1}{n}\right) ^{+}  \notag \\
&\leq &\sup_{[0,\frac{1}{n}]}\left( \Delta^{(i)} X_{t}^{\varepsilon
}+\Delta^{(i)} R_{t}^{\varepsilon }\right) -\left( \Delta^{(i)} X_{\frac{1}{n%
}}^{\varepsilon }+\Delta^{(i)} R_{\frac{1}{n}}^{\varepsilon }\right) ^{+}+%
\frac{\left\vert \mu _{\varepsilon }\right\vert }{n}  \notag \\
&\leq &\sup_{[0,\frac{1}{n}]}\Delta^{(i)} X_{t}^{\varepsilon }-\left(
\Delta^{(i)} X_{\frac{1}{n}}^{\varepsilon }\right) ^{+}+\frac{\left\vert \mu
_{\varepsilon }\right\vert }{n}+\sup_{[0,\frac{1}{n}]}\Delta^{(i)}
R_{t}^{\varepsilon }+\left( -\Delta^{(i)}R_{\frac{1}{n}}^{\varepsilon
}\right) ^{+}  \notag \\
&\leq &\sup_{[0,\frac{1}{n}]}\Delta^{(i)} X_{t}^{\varepsilon }-\left(
\Delta^{(i)} X_{\frac{1}{n}}^{\varepsilon }\right) ^{+}+\frac{\left\vert \mu
_{\varepsilon }\right\vert }{n}+2\sup_{[0,\frac{1}{n}]}\left\vert
\Delta^{(i)} R_{t}^{\varepsilon }\right\vert  \label{estimate:Vp_raw}
\end{eqnarray}%
where we use $\left( a\!+\!b\right) ^{+}\leq a^{+}+b^{+}$ with $%
a=\Delta^{(i)}X_{\frac{1}{n}}^{\varepsilon } +\Delta^{(i)}R_{\frac{1}{n}%
}^{\varepsilon } +\mu _{\varepsilon }\frac{1}{n}$, $b=-\mu _{\varepsilon }%
\frac{1}{n}$ in the first inequality, and $a=\Delta^{(i)}X_{\frac{1}{n}%
}^{\varepsilon } +\Delta^{(i)}R_{\frac{1}{n}}^{\varepsilon }$ , $%
b=-\Delta^{(i)}R_{\frac{1}{n}}^{\varepsilon }$ in the second inequality.

Let $Z^{(i)}_{n}:=\sup_{[0,\frac{1}{n}]} \Delta^{(i)} X_{t}^{\varepsilon } -
\left(\Delta X_{\frac{1}{n}}^{\varepsilon } \right)^+$ and $%
S^{(i)}_{n}:=\sup_{[0,\frac{1}{n}]}\left\vert \Delta^{(i)}
R_{t}^{\varepsilon }\right\vert$. Then, for $p\geq 1$, Jensen's inequality
gives us
\begin{eqnarray}  \label{decomp:Vp}
\lefteqn{\mathbb{E}\left[ D_{n}^{p}\right] }  \notag \\
&\!\!\leq\!\! &\mathbb{E}\left[ \max_{0\leq i<n} ( Z^{(i)}_{n} +\frac{%
\left\vert\mu_\varepsilon\right\vert}{n} +2 \, S^{(i)}_{n} ) ^{p}%
\right]  \notag \\
&\!\!\leq\!\! &3^{p-1} \mathbb{E}\left[ \max_{0\leq i<n} ( Z^{(i)}_{n}
)^{p} + \left( \left\vert \mu_{\varepsilon }\right\vert /n%
\right)^{p} + 2^p\! \max_{0\leq i<n} \left( S_{n}^{(i)} \right)^{p} \right]
\notag \\
&\!\!\leq\!\! & 3^{p-1} n\ \mathbb{E}\Big[( \sup_{[0,\frac{1}{n}%
]}X_{t}^{\varepsilon } -\left( X_{\frac{1}{n}}^{\varepsilon }\right)
^{+})^{p}\Big] \!+ 3^{p-1}\!\!\left( \left\vert \mu_{\varepsilon }\right\vert /n%
\right)^{p} \!\!+3^{p-1}2^p\, \mathbb{E}\Big[
\max_{0\leq i<n} \left( S_n^{(i)} \right)^p \Big]  \notag \\
\end{eqnarray}
where in the final step we have used the fact that all of the $%
\Delta^{(i)}X^\varepsilon_t$ have the same distribution as $X^\varepsilon_t$.

The task now is to bound the first and third terms in the final line of (\ref%
{decomp:Vp}).

\subsection{Bounding moments of {$\sup_{[0,\frac{1}{n}]}X_t^\protect%
\varepsilon-(X_\frac{1}{n}^\protect\varepsilon )^+$}}

\begin{theorem}
\label{thm:main}Let $X$ be a scalar L\'{e}vy process with a triple $%
(m,0,\nu )$, and let $X_t^\varepsilon$, $\mu_\varepsilon$, $%
\lambda_\varepsilon$ and $\sigma_\varepsilon$ be as defined in section \ref%
{sec:decomposition}.

Then provided $\lambda_\varepsilon \!\leq\! n$, for any $p\!>\!1$ there
exists a constant $K_p$ such that
\begin{equation}
\mathbb{E}\Big[ \big( \sup_{[0,\frac{1}{n}]}X_{t}^{\varepsilon } -\left(
X_{\frac{1}{n}}^{\varepsilon }\right) ^{+}\big) ^{p}\Big] \leq
K_{p}\big( \varepsilon ^{p}+\frac{L_{\varepsilon }\left( p\right) }{\lambda
_{\varepsilon }^{2}}\big) \frac{\lambda _{\varepsilon }^{2}}{n^{2}},
\label{estimate_Z}
\end{equation}%
where $L_{\varepsilon }\left( p\right) = p\int_{x>\varepsilon}x^{p-1}\lambda
_{x}^{2}\, \D x$ is a function depending on the L\'{e}vy measure $\nu(x)$.
\end{theorem}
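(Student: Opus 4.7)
The plan is to decompose the compound Poisson process $X_t^\varepsilon$ into its independent positive and negative parts, so that $Q:=\sup_{[0,1/n]}X_t^\varepsilon - (X_{1/n}^\varepsilon)^+$ gets pointwise bounded by a minimum of two independent random variables. Concretely, let $A_t$ collect the jumps of $X^\varepsilon$ with $z>\varepsilon$ (an increasing compound Poisson with intensity $\lambda_\varepsilon^+/n$ over $[0,1/n]$ and jumps drawn from $\nu|_{(\varepsilon,\infty)}/\lambda_\varepsilon^+$) and let $B_t$ collect $|z|$ for jumps with $z<-\varepsilon$ (intensity $\lambda_\varepsilon^-/n$). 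Then $X_t^\varepsilon=A_t-B_t$, with $A\perp B$. Since $A$ is non-decreasing, $\sup X^\varepsilon\le A_{1/n}$, hence $Q\le A_{1/n}$; and in each sign case for $X_{1/n}^\varepsilon$ a direct check gives $Q\le B_{1/n}$ as well. This establishes the key pathwise inequality $Q\le\min(A_{1/n},B_{1/n})$.

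Next I would invoke the layer-cake identity together with independence:
\[
\mathbb{E}[Q^p]\le \mathbb{E}[\min(A_{1/n},B_{1/n})^p]
= p\int_0^\infty x^{p-1}\,\mathbb{P}(A_{1/n}>x)\,\mathbb{P}(B_{1/n}>x)\,dx,
\]
and split the range at $\varepsilon$. For $x\in(0,\varepsilon]$, the naive bounds $\mathbb{P}(A_{1/n}>x)\le\mathbb{P}(K^+\ge1)\le\lambda_\varepsilon^+/n$ and its analogue for $B$ combine, via $\lambda_\varepsilon^+\lambda_\varepsilon^-\le\lambda_\varepsilon^2/4$, to contribute $\lesssim \varepsilon^p\lambda_\varepsilon^2/n^2$, which is the first term on the right-hand side.

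For $x>\varepsilon$, the strategy is to expand the Poisson probabilities in the number of jumps, writing $\mathbb{E}[\min(A,B)^p]=\sum_{k,m\ge1}\frac{e^{-\lambda_\varepsilon/n}(\lambda_\varepsilon^+/n)^k(\lambda_\varepsilon^-/n)^m}{k!\,m!}\,\mathbb{E}[\min(S_k^+,S_m^-)^p]$, where $S_k^+=\sum_{i=1}^k Y_i^+$ and $S_m^-=\sum_{j=1}^m Y_j^-$ with iid $Y^\pm$ drawn from $\nu|_\pm/\lambda_\varepsilon^\pm$. The $(k,m)=(1,1)$ term is handled explicitly: by layer cake on $\min(Y^+,Y^-)$ together with $\lambda_x^+\lambda_x^-\le\lambda_x^2/4$, it contributes
$\frac{\lambda_\varepsilon^+\lambda_\varepsilon^-}{n^2}\mathbb{E}[\min(Y^+,Y^-)^p] \le \frac{1}{4n^2}\bigl(\varepsilon^p\lambda_\varepsilon^2+L_\varepsilon(p)\bigr)$, exactly matching the desired bound.

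The main obstacle is to show that the cross terms $(k,m)$ with $k+m\ge3$ are controlled by a constant multiple of the $(1,1)$ contribution. The natural tool is the union bound $\mathbb{P}(S_k^+>x)\le k\,\mathbb{P}(Y^+>x/k)$ combined with the independence of $A$ and $B$, giving $\mathbb{E}[\min(S_k^+,S_m^-)^p]\le C\,k^{p+1}m^{p+1}\mathbb{E}[\min(Y^+,Y^-)^p]$; the double series $\sum_{k+m\ge 3}(\lambda_\varepsilon^+/n)^k(\lambda_\varepsilon^-/n)^m k^{p+1}m^{p+1}/(k!m!)$ converges whenever $\lambda_\varepsilon\le n$ and is bounded by a $p$-dependent constant times $\lambda_\varepsilon^+\lambda_\varepsilon^-/n^2$. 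The delicate point is that the shift $x\mapsto x/k$ in the union bound tends to introduce shifted tails $\lambda_{x/k}^\pm$ rather than $\lambda_x^\pm$, which upon integration could force the constant to depend on the ratio $\lambda_{\varepsilon/2}/\lambda_\varepsilon$ (hence on the process); the argument must therefore combine the geometric decay of the Poisson weights with the monotonicity $\lambda_x\le\lambda_\varepsilon$ and repack the resulting integrals into the $\varepsilon^p\lambda_\varepsilon^2$ and $L_\varepsilon(p)$ pieces. Once this repacking is done, summing the $(1,1)$ term and the higher-order residues yields the desired bound with a $K_p$ depending only on $p$.
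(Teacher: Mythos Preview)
Your approach is correct and the worry you flag is resolvable: after the union bound, substitute $u=x/\max(k,m)$ in the layer-cake integral. With, say, $m\ge k$ this gives
\[
p\,km\int_0^\infty x^{p-1}\mathbb{P}(Y^+>x/k)\,\mathbb{P}(Y^->x/m)\,\D x
= p\,k\,m^{p+1}\int_0^\infty u^{p-1}\mathbb{P}(Y^+>mu/k)\,\mathbb{P}(Y^->u)\,\D u,
\]
and since $mu/k\ge u$ the shifted tail is dominated by $\mathbb{P}(Y^+>u)$, so the integral collapses to $\mathbb{E}[\min(Y^+,Y^-)^p]$ with no process-dependent ratio appearing. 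The resulting coefficients $\min(k,m)\max(k,m)^{p+1}\le(k+m)^{p+2}$ sum against the Poisson weights to a constant times $\lambda_\varepsilon^+\lambda_\varepsilon^-/n^2$ when $\lambda_\varepsilon\le n$, exactly as you want.

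This is a genuinely different route from the paper. The paper does \emph{not} split by sign; instead it conditions on the total jump count $N$ in $[0,1/n]$ and uses a combinatorial observation: if $N=k$ and $Z>x$, then some partial sum exceeds $x$ and the remaining partial sum is below $-x$ (or similar), forcing two distinct jumps with $|Y_{j_1}|,|Y_{j_2}|>x/(k-1)$. This yields $\mathbb{P}(Z>x\mid N=k)\le\binom{k}{2}\mathbb{P}(|Y|>x/(k-1))^2$, and integrating and summing in $k$ produces the same two pieces $\varepsilon^p$ and $L_\varepsilon(p)/\lambda_\varepsilon^2$. Your sign decomposition trades that combinatorial step for the clean pathwise inequality $Q\le\min(A_{1/n},B_{1/n})$ and the independence of $A$ and $B$; the paper's argument trades the independence structure for a single Poisson variable and a pairwise tail bound. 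Both lead to essentially the same estimate with constants depending only on $p$.
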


\begin{proof}
Let
\begin{equation*}
Z=\sup_{[0,\frac{1}{n}]}X_t^\varepsilon-\left( X_{\frac{1}{n}%
}^\varepsilon\right)^+.
\end{equation*}

We will determine an upper bound on $\mathbb{E}\left[ Z^{p}\right]$ by
analysing the jump behavior of the finite-activity process $X_t^\varepsilon$
in a single interval $[0,\frac{1}{n}]$.

\begin{figure}[t!]
\begin{center}
{\setlength{\unitlength}{1cm}
\begin{picture}(10.5,4)(-0.5,0)
\put(0,0){\vector(1,0){4.5}}
\put(0,0){\vector(0,1){3}}
\put(6,0){\vector(1,0){4.5}}
\put(6,0){\vector(0,1){3}}
\put(3,-0.2){\makebox(0,0){$t$}}
\put(-0.25,2){\makebox(0,0){$X_t^\varepsilon$}}
\put(8,-0.2){\makebox(0,0){$t$}}
\put(5.75,2){\makebox(0,0){$X_t^\varepsilon$}}

\thicklines
\put(0,0){\line(1,0){2}}
\put(2,2){\line(1,0){2}}
\put(0,0){\circle*{0.15}}
\put(4,2){\circle*{0.15}}
\put(6,0){\line(1,0){1.5}}
\put(7.5,3){\line(1,0){1.5}}
\put(9,2){\line(1,0){1}}
\put(6,0){\circle*{0.15}}
\put(10,2){\circle*{0.15}}

\end{picture}} 
\end{center}
\caption{Behavior of $X_t^\protect\varepsilon$ in the case of one or two
jumps in the interval $[0,\frac{1}{n}]$.}
\label{fig:jump}
\end{figure}
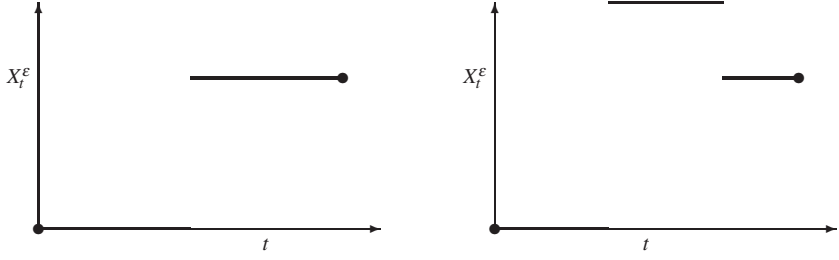

Let $N$ be the number of jumps. If $N\!\leq\! 1$, then $Z\!=\!0$, while if $%
N\!=\!2$, then $Z\leq \min \left( \left\vert Y_{1}\right\vert ,\left\vert
Y_{2}\right\vert \right)$. This can be seen from the behavior of $%
X_t^\varepsilon$ in the different scenarios illustrated in Figure \ref%
{fig:jump}. More generally, if $N\!=\!k,\ k\!\geq\! 2$, then
\begin{eqnarray*}
Z>x &\Longrightarrow &\exists\ 1\leq j\leq k\!-\!1\text{ s.t. }\left\vert
\sum_{l=1}^{j}Y_{l}\right\vert >x,\ \left\vert
\sum_{l=j+1}^{k}Y_{l}\right\vert >x \\
&\Longrightarrow &\exists\ j_{1},j_{2}\text{ s.t. } \left\vert Y_{j_{1}}
\right\vert >\frac{x}{k\!-\!1},\ \ \left\vert Y_{j_{2}} \right\vert >\frac{x%
}{k\!-\!1}.
\end{eqnarray*}
Since
\begin{eqnarray*}
\lefteqn{ \hspace*{-0.5in} \mathbb{P}\big[ \exists j_{1},j_{2}\text{ s.t. }
\left\vert Y_{j_{1}}\right\vert >\frac{x}{k-1},\ \left\vert
Y_{j_{2}}\right\vert >\frac{x}{k-1}\big] } \\
&\leq &\sum_{\left( j_{1},j_{2}\right) }\mathbb{P}\big[\left\vert
Y_{j_{1}}\right\vert >\frac{x}{k-1},\ \left\vert Y_{j_{2}}\right\vert >\frac{%
x}{k-1}\big] \\
&=&\frac{k\left( k-1\right) }{2}\mathbb{P}\big[\left\vert Y_{1}\right\vert >%
\frac{x}{k-1}\big]^{2}.
\end{eqnarray*}
it follows that
\begin{eqnarray}
\mathbb{E}\left[ Z^{p}\mid N\!=\!k \,\right] &=& p \int x^{p-1}\mathbb{P}\left[Z>x\mid N\!=\!k\right] \D x  \notag \\
&\leq &\frac{k\,(k\!-\!1)}{2}\, p\int x^{p-1}\mathbb{P}\big[\left\vert
Y_{1}\right\vert >\frac{x}{k-1}\big]^{2}\D x  \notag \\
&=&\frac{k\,(k\!-\!1)}{2}\,\frac{p}{\lambda_\varepsilon^2}\int x^{p-1}\big(
\int_{\{ |z| >x/\left( k-1\right) \}}1_{\left\{ \varepsilon
<\left\vert z\right\vert \right\} }\nu (\D z)\big) ^{2}\D x  \notag \\
&=&\frac{k\,(k\!-\!1)^{p+1}}{2}\, \frac{p}{\lambda_\varepsilon^2} \int
x^{p-1}\big( \int_{\{ |z| >x\}}1_{\left\{ \varepsilon
<\left\vert z\right\vert \right\} }\nu (\D z)\big) ^{2}\D x  \notag \\
&\equiv &d_{k,p}\big( \varepsilon ^{p}+\frac{L_{\varepsilon }\left(
p\right) }{\lambda _{\varepsilon }^{2}}\big) ,  \label{estimate_Z_k}
\end{eqnarray}
where $d_{k,p}=\frac{1}{2} k\,(k\!-\!1)^{p+1}$. We then have
\begin{eqnarray}
\mathbb{E}\left[ Z^{p}\right] &=& \sum_{k=2}^{\infty }\mathbb{E}\left[%
Z^p\mid N\!=\!k\right]\ \mathbb{P}\left[ N\!=\!k\right]  \notag \\
&\leq & \big(\varepsilon^p+\frac{L_\varepsilon(p)}{\lambda_\varepsilon^2}%
\big) \ \exp \big( -\frac{\lambda_\varepsilon}{n}\big)\
\sum_{k=2}^{\infty }d_{k,p}\big( \frac{\lambda _{\varepsilon }}{n}\big)^k
\frac{1}{k!}  \notag
\end{eqnarray}
For $k_{p}=\lceil p\rceil \!+\!2$ there exists $C_{p}$ such that for any $%
k\geq k_{p}$, $d_{k,p}\leq C_{p}\frac{k!}{\left( k-k_{p}\right) !}$, so
\begin{eqnarray*}
\sum_{k=2}^{\infty }d_{k,p}\big( \frac{\lambda _{\varepsilon }}{n}\big)^k\frac{1}{k!} &\leq & \sum_{k=2}^{k_{p}-1}d_{k,p}\big( \frac{%
\lambda_\varepsilon}{n}\big)^{k}\frac{1}{k!} + C_{p}\sum_{k=k_{p}}^{\infty
}\big( \frac{\lambda _{\varepsilon }}{n}\big)^k \frac{1}{%
\left(k-k_{p}\right)!} \\
&\leq & \sum_{k=2}^{k_{p}-1}d_{k,p}\big( \frac{\lambda_\varepsilon}{n}%
\big)^{k}\frac{1}{k!} + C_{p} \big( \frac{\lambda _{\varepsilon }}{n}\big)^k_p \exp\left( \frac{\lambda_\varepsilon}{n}\right) \\
&\leq &K_{p}\big( \frac{\lambda _{\varepsilon }}{n}\big) ^{2}
\end{eqnarray*}
for some constant $K_p$, where the last step uses the fact that $\lambda
_{\varepsilon }\!\leq\! n.$

Therefore, we obtain the final result that
\begin{equation*}
\mathbb{E}\left[ Z^{p}\right] \leq K_{p}\big( \varepsilon ^{p}+\frac{L_{\varepsilon }\left( p\right) }{\lambda
_{\varepsilon }^{2}}\big) \frac{\lambda _{\varepsilon }^{2}}{n^{2}}.
\end{equation*}
\end{proof}


\subsection{Bounding moments of {$\sup_{[0,T]}\left\vert R_{t}^{\protect%
\varepsilon }\right\vert$}}

\begin{proposition}
\label{prop:sig_p}Let $X$ be a scalar L\'{e}vy process with a triple $%
(m,0,\nu )$ and let $R_t^\varepsilon$, $\mu_\varepsilon$, $%
\lambda_\varepsilon$ and $\sigma_\varepsilon$ be as defined in section \ref%
{sec:decomposition}. Then $R_{t}^{\varepsilon }$ satisfies
\begin{equation}
\mathbb{E}\left[ \sup_{[0,T]} \left\vert R_t^\varepsilon\right\vert^p\right]
\leq \left\{
\begin{array}{ll}
K_p\left( T^{p/2}\sigma_\varepsilon^p + T\int_{\{ |z|\leq
\varepsilon \}}\left\vert z\right\vert ^{p}\nu (\D x)\right), & p>2; \\[0.1in]
K_p\, T^{p/2}\sigma_\varepsilon^p, & 1\leq p\leq 2,%
\end{array}%
\right.  \label{estimate:Rp}
\end{equation}%
where $K_{p}$ is a constant depending on $p$.
\end{proposition}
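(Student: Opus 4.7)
The plan is to exploit the fact that $R_t^\varepsilon$ is a purely discontinuous square-integrable cadlag martingale whose jumps are bounded by $\varepsilon$ in absolute value. Its quadratic variation
\[
[R^\varepsilon]_T = \int_0^T\!\!\int_{|z|\leq\varepsilon} z^2\, J(\D z, \D s)
\]
has deterministic compensator $\mathbb{E}\bigl[[R^\varepsilon]_T\bigr] = T\sigma_\varepsilon^2$, so the desired bound will follow from standard martingale moment inequalities, applied differently in the two $p$-regimes.

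For $1\leq p\leq 2$, I would first apply the Burkholder--Davis--Gundy inequality
\[
\mathbb{E}\bigl[\sup_{[0,T]}|R_t^\varepsilon|^p\bigr] \leq C_p\, \mathbb{E}\bigl[[R^\varepsilon]_T^{p/2}\bigr],
\]
and then observe that since $x\mapsto x^{p/2}$ is concave for $p\leq 2$, Jensen's inequality bounds the right-hand side by $(\mathbb{E}[[R^\varepsilon]_T])^{p/2} = T^{p/2}\sigma_\varepsilon^p$. This handles the second case; at the boundary $p=2$ the same bound is delivered directly by Doob's $L^2$ inequality applied to the martingale $R_t^\varepsilon$.

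For $p>2$ this argument breaks down because Jensen's inequality now runs the wrong way, and indeed a single atypically large ``small'' jump can push the $p$-th moment above the quadratic-variation scaling. I would instead invoke Kunita's first inequality for compensated Poisson integrals, which for the deterministic predictable integrand $\gamma(s,z) = z\,\mathbf{1}_{\{|z|\leq\varepsilon\}}$ yields
\[
\mathbb{E}\bigl[\sup_{[0,T]}|R_t^\varepsilon|^p\bigr] \leq K_p \Bigl( (T\sigma_\varepsilon^2)^{p/2} + T\textstyle\int_{|z|\leq\varepsilon} |z|^p\,\nu(\D z) \Bigr),
\]
which is exactly the desired bound.

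The main obstacle, if one wished to avoid citing Kunita's inequality as a black box, would be the $p>2$ case. A self-contained derivation would proceed by applying Ito's formula to $|R_t^\varepsilon|^p$ and bounding the resulting stochastic integral and Lebesgue compensator terms by induction on $p$ using BDG. The pure $p$-th moment term arises naturally from the discrete-jump contribution in the Ito expansion, and cannot generally be absorbed into the $(T\sigma_\varepsilon^2)^{p/2}$ term, which is precisely why it appears as a separate summand in the statement. Keeping the constant $K_p$ independent of $T$ requires only that the BDG/Kunita constants depend on $p$ alone, which is standard.
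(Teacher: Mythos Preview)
Your proposal is correct and follows essentially the same approach as the paper: martingale moment inequalities split into the two $p$-regimes, with the $p>2$ case handled by BDG together with a bound on $\mathbb{E}\bigl[[R^\varepsilon]_T^{p/2}\bigr]$ (the paper cites the derivation in Protter, which is precisely the content of Kunita's first inequality you invoke). The only cosmetic difference is in the $1\le p\le 2$ case, where the paper applies Jensen first and then Doob's $L^2$ inequality, whereas you apply BDG first and then Jensen on the concave map $x\mapsto x^{p/2}$; both orderings yield the same $T^{p/2}\sigma_\varepsilon^p$ bound.
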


\begin{proof}
For any $1\!\leq\!p\!\leq\!2$, by Jensen's inequality and the Doob
inequality (c.f. Theorem 19 and 20 in \cite{protter04}),
\begin{eqnarray*}
\mathbb{E}\left[ \sup_{0\leq t\leq T}\left\vert R_{t}^{\varepsilon
}\right\vert ^{p}\right] &\leq &\mathbb{E}\left[ \sup_{0\leq t\leq
T}\left\vert R_{t}^{\varepsilon }\right\vert ^{2}\right]^{p/2} \\
&\leq &2^p\, \mathbb{E}\left[ \left\vert R_T^\varepsilon\right\vert ^{2}%
\right]^{p/2} \\
&=&2^p\, T^{p/2}\sigma _{\varepsilon }^{p}.
\end{eqnarray*}

For any $p\!>\!2$, the Burkholder--Davis--Gundy inequality (c.f. Theorem 73 in \cite%
{protter04}) gives
\begin{equation}
\mathbb{E}\left[ \sup_{0\leq t\leq 1}\left\vert R_{t}^{\varepsilon
}\right\vert ^{p}\right] \leq \mathbb{E}\left[ \left[ R^{\varepsilon }\right]
_{1}^{p/2}\right]  \notag
\end{equation}%
where $\left[ R^\varepsilon\right]_t$ is the quadratic variation of $%
R_t^\varepsilon$. We can use the method in the proof of Theorem 66
in\cite{protter04} to get
\begin{eqnarray*}
\mathbb{E}\left[ \left[ R^{\varepsilon }\right] _{1}^{p/2}\right] &\leq
&K_{p}\left[ \left( \int_{\{ |z| \leq \varepsilon \}}z^{2}\nu
(\D z)\right) ^{p/2}+\int_{\{ |z| \leq \varepsilon
\}}\left\vert z\right\vert ^{p}\nu (\D z)\right] \\
&=&K_{p}\left( \sigma _{\varepsilon }^{p}+\int_{\{ |z| \leq
\varepsilon \}}\left\vert z\right\vert ^{p}\nu (\D z)\right)
\end{eqnarray*}%
where $K_{p}$ is a constant depending on $p$.

To extend this result to an arbitrary time interval $[0,T]$ we use a change
of time coordinate, $t^{\prime }\!=\!t/T$, with associated changed L\'{e}vy
measure $\nu^{\prime }(\D z)\!=\!T\,\nu (\D z)$ to obtain
\begin{equation*}
\mathbb{E}\left[ \sup_{[0,T]}\left\vert R_{t}^{\varepsilon }\right\vert ^{p}%
\right] \leq K_{p}\left[ T^{p/2} \sigma _{\varepsilon }^{p/2}
+T\int_{\{ |z| \leq \varepsilon \}}\left\vert z\right\vert
^{p}\nu (\D z)\right] .
\end{equation*}
\end{proof}


\subsection{Bounding moments of $\max_{0\leq i<n}S_n^{(i)}$}


\begin{proposition}
\label{prop:Sn}Let $X$ be a scalar pure jump L\'{e}vy process, with L\'{e}%
vy measure $\nu (x)$ which satisfies
\begin{equation*}
C_{2}\left\vert x\right\vert ^{-1-\alpha }\leq \nu (x)\leq C_{1}\left\vert
x\right\vert ^{-1-\alpha },\ \text{as }\left\vert x\right\vert \leq 1;
\end{equation*}%
for constants $C_{1},C_{2}>0$ and $0\!\leq\! \alpha \!<\!2$. If $S_{n}^{(i)}$
is as defined in section \ref{sec:decomposition}, and $\lambda_\varepsilon%
\leq n$, then for $p\!\geq\!1,$ and arbitrary $\delta \!>\! 0$ there exists
a constant $C_{p,\delta}$, which does not depend on $n, \varepsilon$ such
that
\begin{equation*}
\mathbb{E}\left[ \left( \max_{0\leq i <n}S_{n}^{\left( i\right) }\right) ^{p}%
\right] \leq C_{p,\delta}\ \varepsilon ^{p-\delta }.
\end{equation*}
In the particular case of $\alpha \!=\!0$, such a bound holds with $%
\delta\!=\!0$.
\end{proposition}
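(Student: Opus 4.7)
The plan is to exploit the i.i.d.\ structure of the $S_n^{(i)}$ (they live on disjoint time windows of the same L\'evy process, hence are identically distributed and independent), combined with Proposition~\ref{prop:sig_p} applied on the short interval $[0,1/n]$, and a Jensen/union-bound trick that converts a $p$-th-moment bound into a high-$q$-moment bound at the cost of an arbitrarily small power of $\varepsilon$. Concretely, for any $q\geq p$, Jensen's inequality together with $(\max_i Y_i)^q \leq \sum_i Y_i^q$ give
\[
\mathbb{E}\bigl[(\max_{0\leq i<n}S_n^{(i)})^p\bigr]
\ \leq\ \bigl(\mathbb{E}\bigl[(\max_{0\leq i<n}S_n^{(i)})^q\bigr]\bigr)^{p/q}
\ \leq\ \bigl(n\,\mathbb{E}[(S_n^{(1)})^q]\bigr)^{p/q}.
\]

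Next I would apply Proposition~\ref{prop:sig_p} with $T=1/n$ and $q>2$ to obtain $\mathbb{E}[(S_n^{(1)})^q]\leq K_q\bigl(n^{-q/2}\sigma_\varepsilon^q + n^{-1}\!\int_{|z|\leq\varepsilon}|z|^q\nu(\D z)\bigr)$, and translate the L\'evy-measure hypotheses into the elementary estimates $\sigma_\varepsilon^2 \leq C\,\varepsilon^{2-\alpha}$ and $\int_{|z|\leq\varepsilon}|z|^q\nu(\D z) \leq C\,\varepsilon^{q-\alpha}$ for $q>\alpha$, both obtained by directly integrating against $\nu(z) \leq C_1|z|^{-1-\alpha}$. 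The matching lower bound $\nu(z)\geq C_2|z|^{-1-\alpha}$ yields $\lambda_\varepsilon \geq c\,\varepsilon^{-\alpha}$ for $\alpha>0$, so the hypothesis $\lambda_\varepsilon \leq n$ forces $n \geq c\,\varepsilon^{-\alpha}$. The key algebraic step is then
\[
n\,\mathbb{E}[(S_n^{(1)})^q]\ \leq\ K_q\Bigl(n^{1-q/2}\sigma_\varepsilon^q + \textstyle\int_{|z|\leq\varepsilon}|z|^q\nu(\D z)\Bigr)\ \leq\ C_q\,\varepsilon^{q-\alpha},
\]
because for $q>2$ the factor $n^{1-q/2}$ is decreasing in $n$, so substituting $n^{1-q/2}\leq c^{1-q/2}\,\varepsilon^{\alpha(q/2-1)}$ collapses the exponent in the first term to $\alpha(q/2-1)+q-q\alpha/2 = q-\alpha$, matching the second term. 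Taking the $p/q$-th power then yields $\mathbb{E}[(\max_{i}S_n^{(i)})^p]\leq C_q^{p/q}\,\varepsilon^{p - p\alpha/q}$.

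To finish, given $\delta>0$ I would simply choose $q$ large enough that $p\alpha/q<\delta$ (and $q>2$, $q\geq p$), obtaining $\mathbb{E}[(\max_{i}S_n^{(i)})^p]\leq C_{p,\delta}\,\varepsilon^{p-\delta}$. For the special case $\alpha=0$, one has $\sigma_\varepsilon=\bO(\varepsilon)$ and $\int_{|z|\leq\varepsilon}|z|^q\nu(\D z)=\bO(\varepsilon^q)$ directly, while $n^{1-q/2}\leq 1$ trivially for $q\geq 2$ and $n\geq 1$, so no inflation is needed and the bound holds with $\delta=0$. The main obstacle I anticipate is precisely the exponent bookkeeping in the displayed collapse to $\varepsilon^{q-\alpha}$: it uses both sides of the power-law hypothesis on $\nu$ (the upper bound to control $\sigma_\varepsilon$ and $\int|z|^q\nu(\D z)$, the lower bound entering through $\lambda_\varepsilon \leq n$ to convert $n^{-1}$ into $\varepsilon^\alpha$), and one must verify that the resulting $C_{p,\delta}$ depends only on $p$, $\delta$ and the L\'evy-measure constants $C_1,C_2,C_3,\alpha$, uniformly in $n$ and $\varepsilon$.
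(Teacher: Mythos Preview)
Your proposal is correct and follows essentially the same route as the paper's proof: apply Jensen to pass from the $p$-th to a $q$-th moment with $q>\max(2,p)$, use the union bound $(\max_i S_n^{(i)})^q\le\sum_i(S_n^{(i)})^q$ together with Proposition~\ref{prop:sig_p} on $[0,1/n]$, convert the L\'evy-measure hypotheses into the bounds $\sigma_\varepsilon^q\prec\varepsilon^{q-q\alpha/2}$ and $\int_{|z|\le\varepsilon}|z|^q\nu(\D z)\prec\varepsilon^{q-\alpha}$, and finally use $\lambda_\varepsilon\le n$ (via the lower bound on $\nu$) to show $\varepsilon^{-\alpha}/n$ is bounded, which makes the bracketed constant uniform and yields $\varepsilon^{p-\alpha p/q}$; then take $q$ large. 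The only cosmetic difference is that the paper factors the bound as $[\cdots(\varepsilon^{-\alpha}/n)^{q/2-1}+\cdots]^{p/q}\varepsilon^{p-\alpha p/q}$ and then argues boundedness of $\varepsilon^{-\alpha}/n$, whereas you substitute $n^{1-q/2}\le c^{1-q/2}\varepsilon^{\alpha(q/2-1)}$ directly; the arithmetic is the same.
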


\begin{proof}
By Proposition \ref{prop:sig_p}, for $q>2$,
\begin{equation*}
\mathbb{E}\left[ \left( \max_{0\leq i <n}S_{n}^{\left( i\right) }\right)^q%
\right] \ \leq \ n\ \mathbb{E}\left[ \sup_{[0,\frac{1}{n}]}\left\vert
R_{t}^{\varepsilon }\right\vert^q\right] \ \leq \ K_q\left(
n^{1-q/2}\sigma_\varepsilon^q + \int_{\{ |z| \leq
\varepsilon \}}\left\vert z\right\vert^q\nu (\D x)\right).
\end{equation*}
Recalling the definition of $\sigma _{\varepsilon }$ (\ref{sig_ep}), due to
the assumption on $\nu(x)$ we have
\begin{equation*}
\sigma _{\varepsilon }^{q} \ \leq\ \left( \frac{2C_{1}}{2-\alpha }%
\right)^{q/2}\varepsilon^{q-q\alpha /2}, ~~~ \int_{\{ |z|
\leq \varepsilon \}}\left\vert z\right\vert ^{q}\nu (\D x) \ \leq\ \frac{2C_{1}%
}{q-\alpha }\ \varepsilon ^{q-\alpha }.
\end{equation*}

Given $p\geq 1$, for any $q>\max \left( 2,p\right) $, Jensen's inequality
gives us
\begin{eqnarray*}
\mathbb{E}\left[ \left( \max_{0\leq i <n}S_{n}^{\left( i\right) }\right) ^{p}%
\right] &\leq& \mathbb{E}\left[ \left( \max_{0\leq i <n}S_{n}^{(i)}\right)
^{q}\right]^{p/q} \\
& \leq & K_q^{p/q}\left[ \left( \frac{2C_{1}}{2-\alpha }\right)^{q/2} \left(%
\frac{\varepsilon^{-\alpha}}{n}\right)^{q/2-1} \!+\ \frac{2C_{1}}{q-\alpha }%
\right]^{p/q} \varepsilon^{p-\alpha p/q}.
\end{eqnarray*}
If $\alpha\!=\!0$, then the desired bound is obtained immediately. On the
other hand, if $0\!<\!\alpha \!<\!2$, then
\begin{equation*}
\lambda _{\varepsilon } \ \geq\ C_2\int_{\{\varepsilon <\left\vert
z\right\vert <1\}} \frac{1}{\left\vert z\right\vert ^{\alpha +1}}\ \D z \ =\
\frac{2\,C_2}{\alpha }\left( \varepsilon ^{-\alpha }-1\right) .
\end{equation*}
Since $\lambda _{\varepsilon }\!\leq\! n$ it implies that $%
\varepsilon^{-\alpha}\leq \frac{K\alpha }{2\,C_2}n+1, $ and thus $%
\varepsilon^{-\alpha}/n$ is bounded. Hence there exists a constant $C$ such
that
\begin{equation*}
\mathbb{E}\left[ \left( \max_{0\leq i <n}S_{n}^{(i)}\right) ^{p}\right] \leq
C\ \varepsilon ^{p-\alpha p/q},
\end{equation*}
and by choosing $q$ large enough so that $\alpha p/q \leq \delta$ we obtain
the desired bound.
\end{proof}

\subsection{Proof of Theorem \protect\ref{prop:Un}}
\label{sec:proofs-4.2}

\begin{proof}
Provided $\lambda_\varepsilon\!\leq\! n$, by (\ref{decomp:Vp}) and (\ref%
{estimate_Z}) we have
\begin{equation}
\mathbb{E}\left[ D_{n}^{p}\right] \prec \underbrace{\mathbb{E}\left[ \left(
\max_{0\leq i<n}S_{n}^{\left( i\right) }\right) ^{p}\right] }_{1)} +
\underbrace{\varepsilon ^{p}\frac{\lambda _{\varepsilon }^{2}}{n}}_{2)} +
\underbrace{\frac{L_{\varepsilon }\left( p\right) }{n}}_{3)} + \underbrace{%
\left( \frac{\left\vert \mu _{\varepsilon }\right\vert }{n}\right) ^{p}}%
_{4)},  \label{estimate:Vp_general}
\end{equation}
where the notation $u\prec v$ means there exists constant $c>0$ independent
of $n$ such that $u<cv$.

We can now bound each term, given the specification of the L\'{e}vy measure,
and if we can choose appropriately how $\varepsilon\rightarrow 0$ as $%
n\rightarrow \infty$ so that the RHS of (\ref{estimate:Vp_general}) is
convergent, then the convergence rate of $\mathbb{E} \left[ D_{n}^{p}\right]
$ can be bounded.

For $0\!<\! x \!<\!1$,%
\begin{eqnarray}
\lambda_x &\leq & C_1 \int_{x <\left\vert z\right\vert <1} \frac{1}{%
\left\vert z\right\vert^{\alpha+1}}\ \D z +\int_{1<\left\vert z\right\vert }
\exp \left( -C_{3}\left\vert z\right\vert\right)\, \D z  \notag \\[0.1in]
&\leq &\left\{
\begin{array}{ll}
2C_1\log \frac{1}{x}+l_1, & \alpha =0; \\[0.1in]
l_2\, x^{-\alpha }, & 0<\alpha <2.%
\end{array}
\right.  \label{estimate:lam1}
\end{eqnarray}
where $l_1, l_2$ are constants with $l_2\geq 2\, C_3^{-1}$, while for $x
\!\geq \! 1$,
\begin{equation*}
\lambda_x\ \leq\ \int_{x<\left\vert z\right\vert } \exp \left(-C_3\left\vert
z\right\vert\right)\, \D z \ =\ 2\, C_3^{-1} \exp( -C_3\, x).
\end{equation*}
If $\alpha >0$, then
\begin{eqnarray}
L_{\varepsilon }(p) &=& p\int_{x>\varepsilon }x^{p-1}\lambda_{x}^{2}\ \D x
\notag \\
&\leq &l_2^2\ p\int_{x>\varepsilon }x^{p-1}\left(\ 1_{\left\{ x<1\right\}
}x^{-2\alpha }+1_{\left\{ x>1\right\} }\exp \left(-2 C_3 x \right)\ \right) %
\D x  \notag \\
&\leq &\left\{
\begin{array}{ll}
l_{4}, & p>2\alpha ; \\[0.05in]
l_{4}\log \frac{1}{\varepsilon }+l_{5}, & p=2\alpha ; \\[0.05in]
l_{4}\varepsilon ^{-2\alpha +p}+l_{5}, & p<2\alpha .%
\end{array}%
\right.  \label{estimate:L}
\end{eqnarray}%
where $l_3, l_4, l_5$ are additional constants. If $\alpha\!=\!0$, it is
easily verified that $L_{\varepsilon }(p)$ is bounded for $p\!\geq\!1$, so (%
\ref{estimate:L}) applies equally to this case.

Given $0\!<\!\varepsilon \!<\!1$ we have%
\begin{eqnarray}
\left\vert \mu _{\varepsilon }\right\vert &=&\left\vert m-\int_{\varepsilon
<\left\vert z\right\vert <1}z\ \nu (\D z)\right\vert  \notag \\
&\leq &\left\{
\begin{array}{ll}
\left\vert m\right\vert + \left\vert C_1\!-\!C_2 \right\vert \frac{%
\varepsilon^{1-\alpha }-1}{\alpha -1}, & \alpha \neq 1; \\[0.1in]
\left\vert m\right\vert +\left\vert C_1\!-\!C_2\right\vert \log \frac{1}{%
\varepsilon }, & \alpha =1.%
\end{array}%
\right.  \label{estimate:mu}
\end{eqnarray}


Subject to the condition that $\lambda_\varepsilon \leq n$, we now have

\begin{enumerate}[1)]
\item By Proposition \ref{prop:Sn},
\begin{equation*}
\mathbb{E}\left[ \left( \max_{i=1,\ldots ,n}S_{n}^{\left( i\right) }\right)
^{p}\right] \prec \varepsilon ^{p-\delta }, ~~ \mbox{for any } \delta\!>\!0.
\end{equation*}

\item By (\ref{estimate:lam1}),
\begin{equation*}
\varepsilon ^{p}\frac{\lambda_\varepsilon^2}{n}\ \prec\ n^{-1} \times
\left\{
\begin{array}{ll}
\varepsilon^{p}\log \frac{1}{\varepsilon}, & \alpha =0; \\[0.1in]
\varepsilon^{p-2\alpha }, & 0<\alpha <2.%
\end{array}%
\right.
\end{equation*}

\item By (\ref{estimate:L}),
\begin{equation*}
\frac{L_{\varepsilon }(p)}{n}\ \prec\ n^{-1}\times \left\{
\begin{array}{ll}
1, & p>2\alpha ; \\[0.05in]
\log \frac{1}{\varepsilon }, & p=2\alpha ; \\[0.05in]
\varepsilon ^{-2\alpha +p}, & p<2\alpha .%
\end{array}
\right.
\end{equation*}

\item By (\ref{estimate:mu}),
\begin{equation*}
\left( \frac{\left\vert \mu _{\varepsilon }\right\vert }{n}\right) ^{p}\prec
n^{-p}\times \left\{
\begin{array}{ll}
1+\left\vert C_{1}-C_{2}\right\vert ^{p}\varepsilon ^{p\left( 1-\alpha
\right) }, & \alpha >1; \\[0.05in]
1+\left( \left\vert C_{1}-C_{2}\right\vert \log \frac{1}{\varepsilon }%
\right)^p, & \alpha =1; \\[0.05in]
1, & \alpha <1.%
\end{array}%
\right.
\end{equation*}
\end{enumerate}

In the following we assume $C_{1}\neq C_{2}.$

\begin{enumerate}
\item $p\geq 2\alpha $.

If we choose $\varepsilon = C\, n^{-2/p}$, then $\lambda_\varepsilon \prec
\varepsilon^{-\alpha} \prec n^{2\alpha/p}, $ and the constant $C$ can be
taken to be sufficiently small so that $\lambda_\varepsilon\leq n$ for
sufficiently large $n$.

Taking $\delta < p/2$, we find that the dominant contribution to (\ref%
{estimate:Vp_general}) comes from 3), giving the desired result that
\begin{equation*}
\mathbb{E}\left[ D_{n}^{p}\right] \prec \left\{
\begin{array}{ll}
n^{-1}, & p > 2 \alpha; \\[0.05in]
\log n / n, & p = 2 \alpha.%
\end{array}%
\right.
\end{equation*}

\item $1\leq p<2\alpha$.

We can use H\"{o}lder's inequality to give $\mathbb{E}\left[ D_n^p \right] \
\leq\ \mathbb{E}\left[ D_n^{2\alpha}\right]^{\frac{p}{2\alpha}} \ \prec \
\left( \log n / n\right)^{\frac{p}{2\alpha }}. $
\end{enumerate}

For a spectrally negative process, $\sup_{[0,\frac{1}{n}]}X_t^\varepsilon -
\left(X_{\frac{1}{n}}^\varepsilon\right)^+ = 0$, since $X_{t}$ doesn't have
positive jumps, and hence
\begin{equation*}
\mathbb{E}\left[ D_{n}^{p}\right] \leq \mathbb{E}\left[ \left(\max_{0\leq
i<n} S_n^{(i)}\right)^p\right] + \left(\frac{\left\vert
\mu_\varepsilon\right\vert }{n}\right)^p.
\end{equation*}

We can take $\varepsilon = C\ n^{-1/\alpha}$ with the constant $C$ again
chosen so that $\lambda_\varepsilon \leq n$ for sufficiently large $n$. We
then obtain
\begin{equation*}
\mathbb{E}\left[ D_{n}^{p}\right] \prec \left\{
\begin{array}{ll}
n^{-p/\alpha + \delta}, & \alpha \geq 1; \\
n^{-p}, & \alpha<1.%
\end{array}%
\right.
\end{equation*}
for any $\delta\!>\!0$.
\end{proof}

\subsection{Proof of Proposition \protect\ref{prop:VG-barrier}}
\label{sec:proofs-5.4}

We decompose the term we want to bound into parts and then balance their asymptotic orders to get desired result. 

Note that
$\mathbf{1}_{\{\widehat{m}_{n}<B\}}-\mathbf{1}_{\left\{ m<B\right\}} =1$
only if either $m$ is close to the barrier or the difference between
discretely and continuously monitored maximum $D_n=m-\widehat{m}_n$ is
large. More precisely,
\begin{equation*}
\left\{ \mathbf{1}_{\{\widehat{m}_n<B\}} - \mathbf{1}_{\left\{m<B\right\}
}=1\right\} \subset F\cup G,
\end{equation*}
where $F := \left\{ B\leq m \leq B+n^{-r}\right\}$ and $G := \left\{
D_{n}>n^{-r}\right\}$ 
for an $r>0$ to be determined. Hence
\begin{equation*}
\mathbb{E}\left[ \mathbf{1}_{\{\widehat{m}_n<B\}} - \mathbf{1}%
_{\left\{m<B\right\} }\right] \leq \mathbb{P}\left[ F\right] +\mathbb{P}\left[G\right] .
\end{equation*}

Due to the locally bounded density for $m$, $\mathbb{P}\left[ F\right] =\bO\left(
n^{-r}\right)$.

If we denote
\begin{equation*}
Z_n^{(i)}=\sup_{[0,\frac{1}{n}]}\Delta^{(i)}X_{t}^{\varepsilon } - \left(
\Delta^{(i)} X_{\frac{1}{n}}^{\varepsilon } \right)^{+}.
\end{equation*}
where $\Delta^{(i)}X_{t}$ is as defined previously in Section \ref%
{sec:decomposition}, then (\ref{estimate:Vp_raw}) gives
\begin{equation*}
D_{n}\leq \max_{0\leq i < n} Z_n^{(i)} + \frac{\left\vert
\mu_\varepsilon\right\vert }{n} +\max_{0\leq i < n} S_n^{(i)}
\end{equation*}
For $\alpha<1$, $\mu_\varepsilon$ is bounded, so $\left\vert
\mu_\varepsilon\right\vert \leq {\ \textstyle \frac{1}{2} }n^{1-r}$, for
sufficiently large $n$. Hence,
\begin{eqnarray*}
\mathbb{P}\left[ D_{n}>n^{-r}\right] & \leq & \mathbb{P}\left[ \max_{0\leq i
< n} Z_n^{(i)} + \max_{0\leq i < n} S_n^{(i)} > {\ \textstyle \frac{1}{2} }%
n^{-r}\right] \\
& \leq & \mathbb{P}\left[ \max_{0\leq i < n} Z_n^{(i)} >{\ \textstyle \frac{1%
}{4} } n^{-r}\right] + \mathbb{P}\left[ \max_{0\leq i < n} S_n^{(i)} >{\ %
\textstyle \frac{1}{4} } n^{-r}\right] .
\end{eqnarray*}

Now, $\max_{0\leq i < n}Z_n^{(i)} > 0 $ requires that there are at least two
jumps in one of the $n$ intervals. The probability of two jumps in one
particular interval is 
$$1 - \exp\left(-\, {\ \textstyle \frac{%
\lambda_\varepsilon}{n} }\right) \left(1 + {\ \textstyle \frac{%
\lambda_\varepsilon}{n} }\right) \prec \left( {\ \textstyle \frac{%
\lambda_\varepsilon}{n} } \right)^2 $$ if $\lambda_\varepsilon \leq n$, and
hence
\begin{equation*}
\mathbb{P}\left[ \max_{0\leq i < n} Z_n^{(i)} >{\ \textstyle \frac{1}{4} }
n^{-r}\right] \prec \frac{\lambda_\varepsilon^2}{n}.
\end{equation*}

We use the Markov inequality for the remaining term. 
According to Proposition \ref{prop:sig_p}, $\mathbb{E}\left[ \max_{0\leq i <
n}\left( S_{n}^{\left( i\right) }\right)^{p}\right] \prec \varepsilon
^{p-\delta } $ and so
\begin{equation*}
\mathbb{P}\left[ \max_{0\leq i < n}S_{n}^{\left( i\right) }>{\ \textstyle
\frac{1}{4} } n^{-r}\right] \ \prec \ \mathbb{E}\left[ \max_{0\leq i <
n}\left(S_n^{(i)}\right)^{p}\right]\ /\ \left({\ \textstyle \frac{1}{4} }
n^{-r}\right)^p \ \prec \ \varepsilon ^{p-\delta }n^{rp}.
\end{equation*}

Combining these elements, provided $\lambda_\varepsilon\leq n$, we have
\begin{equation*}
\mathbb{E}\left[ \mathbf{1}_{\{\widehat{m}_n<B\}}-\mathbf{1}%
_{\left\{m<B\right\} }\right] \prec n^{-r}+\varepsilon ^{p-\delta }n^{rp} +
\frac{\lambda_\varepsilon^2}{n}.
\end{equation*}

Equating the first two terms on the right hand side gives $\varepsilon =
n^{-r(1+p)/(p-\delta)}$.

\vspace{0.1in}

If $\alpha\!=\!0$, then $\lambda_\varepsilon \prec \log \frac{1}{\varepsilon}
\prec \log n$, so $\lambda_\varepsilon=\bo(n) $ is satisfied. We also have $%
\frac{\lambda_\varepsilon^2}{n} \prec \frac{\left(\log n\right)^2}{n}, $ and
therefore for any $r\!<\!1$ we have $\mathbb{E}\left[ \mathbf{1}_{\{\widehat{%
m}_n<B\}} -\mathbf{1}_{\left\{m<B\right\} }\right] \prec n^{-r}. $

\vspace{0.1in}

If $0\!<\!\alpha \!<\!2$, then $\lambda_\varepsilon \prec
\varepsilon^{-\alpha } \prec n^{r\alpha (1+p)/(p-\delta)}$, and hence $\frac{%
\lambda_\varepsilon^2}{n} \prec n^{-1+2r\alpha (1+p)/(p-\delta)}$. Balancing
$n^{-r}$ and $n^{-1+2r\alpha (1+p)/(p-\delta) }$, gives $\lambda_\varepsilon=%
\bo(n)$ and
\begin{equation}
r=\left( 1 + 2\alpha\ \frac{1\!+\!p}{p-\delta}\right)^{-1}.  \label{eq:r}
\end{equation}

Since $r\rightarrow \frac{1}{1+2 \alpha}$ as $\delta\rightarrow 0$, and $%
p\rightarrow \infty$, for any fixed value of $r<\frac{1}{1+2 \alpha}$ it is
possible to choose appropriate values of $p$ and $\delta$ to satisfy (\ref%
{eq:r}) and thereby conclude that $\mathbb{E}\left[ \mathbf{1}_{\{\widehat{m}%
_n<B\}} -\mathbf{1}_{\left\{m<B\right\} }\right] \prec n^{-r}. $

\begin{acknowledgements}
This work was supported by the China Scholarship Council and the 
Oxford-Man Institute of Quantitative Finance. We would like to thank 
Ben Hambly, Andreas Kyprianou, Loic Chaumont, Jacek Malecki and Jose Blanchet  for their helpful comments.
\end{acknowledgements}

\end{document}